\documentclass[12pt, a4paper]{article}

\usepackage{amsfonts, mathrsfs, amsmath, amssymb, amsthm}

\usepackage{graphicx}
\usepackage[colorlinks=true, allcolors=blue]{hyperref}
\usepackage{xcolor}
\usepackage{mathabx}

\usepackage[utf8]{inputenc}
\usepackage[T1]{fontenc}
\usepackage{mathtools}                   
\usepackage{physics}                     
\usepackage{bm}                          
\usepackage{float}                       
\usepackage{subcaption}                  
\usepackage{booktabs}                    
\usepackage{hyperref}                    
\usepackage{cleveref}                    
\usepackage{geometry}                    
\usepackage{setspace}                    
\usepackage{enumitem}                    
\usepackage{doi}                         
\usepackage{cite}                        
\usepackage{microtype}                   
\usepackage{appendix}                    

\hypersetup{
    colorlinks = true,
    linkcolor  = blue!70!black,
    citecolor  = green!50!black,
    urlcolor   = cyan!70!black,
    pdftitle   = {Your Paper Title},
    pdfauthor  = {Author Name}
}

\theoremstyle{plain}
\newtheorem{theorem}{Theorem}[section]

\newtheorem{lemma}[theorem]{Lemma}

\theoremstyle{definition}
\newtheorem{definition}[theorem]{Definition}

\theoremstyle{remark}

\newcommand{\caI}{\mathcal I}

\newcommand{\caP}{\mathcal P}

\newcommand{\Z}{\mathbb{Z}}
\newcommand{\E}{\mathbb{E}}

\newcommand{\bbL}{\mathbb{L}}

\begin{document}

\title{\textbf{Order by disorder up to arbitrarily high temperature}}

\author{
      Ravish Mehta\thanks{\texttt{ravish.mehta@kuleuven.be}}\\
      \small Instituut voor Theoretische Fysica, KU Leuven, Belgium
  }

\date{\today}

\maketitle

\begin{abstract}
    \noindent
    We prove that a class of classical lattice models on $\mathbb{Z}^d$ ($d \geq 2$) with on-site space $\mathbb{N}_0$ and nearest neighbour interaction,
    exhibits long-range checkerboard order at sufficiently high temperature.  The ordering mechanism is purely entropic.
    The class of models contains the recently introduced  model of
    Han--Huang--Komargodski--Lucas--Popov \cite{HanEtAl2026}, by which our work is inspired.  The
    proof uses Pirogov--Sinai theory and the key input is a Peierls bound.

    \vspace{0.5em}
    \noindent\textbf{Keywords:} statistical mechanics, phase transitions
\end{abstract}


\section{Introduction}
\label{sec:introduction}

In statistical physics, ordered phases are typically associated with low temperature:
as the temperature increases, thermal fluctuations wash out long-range correlations. This expectation is supported by classical no-go results due to
Dobrushin~\cite{Dobrushin} and K\"unsch~\cite{Kunsch, Georgii2011}, which guarantee uniqueness of
the Gibbs measure at sufficiently high temperature whenever the local interactions are
bounded. A counterpoint has emerged in recent years. When the local single-spin space
is unbounded --- as for bosonic occupation numbers $n_x \in \mathbb{N}_0$ --- the
assumptions underlying the classical no-go theorems fail, and the possibility of
\emph{entropic order} arises: long-range order that not only survives at high temperature
but is driven by it. The mechanism is that a subset of configurations admits much
larger thermal fluctuations of an auxiliary degree of freedom than the rest, so that the
entropy gained outweighs any energetic preference for disorder. This is in the spirit of
\emph{order by disorder}, introduced by Villain, Bidaux, Carton, and Conte~\cite{Villain1980}
and independently by Shender~\cite{Shender1982, Henley1989} for frustrated magnets, though the regime
is opposite: order by disorder is a low-temperature effect, whereas entropic order arises
in the $\beta \to 0$ limit and selects configurations that are not energy minimisers.

Explicit models exhibiting entropic order have been studied in a series of recent
works~\cite{ChaiEtAl2022, KomargodskiPopov2024, HanEtAl2026, HuangEtAl2025}. In
particular, Han, Huang, Komargodski, Lucas, and Popov~\cite{HanEtAl2026} introduced the
classical lattice boson model with formal Hamiltonian
\begin{equation}
    H \;=\; U\!\!\sum_{\substack{x, y \in \mathbb{Z}^d \\ x \sim y}}\!\! n_x^p n_y^p
    \;+\; \mu \sum_{x \in \mathbb{Z}^d} n_x, \qquad n_x \in \mathbb{N}_0,
    \label{eq:HHKLP-Hamiltonian}
\end{equation}
and argued, via mean-field theory and Monte Carlo simulations, that this model exhibits long-range checkerboard order at all
sufficiently high temperatures when $p > 1$. 
The aim of the present paper is to prove this, using 
\emph{Pirogov--Sinai theory}~\cite{PirogovSinai1975,
PirogovSinai1976, Zahradnik1984, Sinai1982, BorgsImbrie1989, FriedliVelenik} --- including a full contour formalism,
a convergent cluster expansion, and a Peierls estimate. 

As a slight generalization of \eqref{eq:HHKLP-Hamiltonian}, we consider the formal Hamiltonian
\begin{equation}
    H \;=\; \sum_{x \in \mathbb{Z}^d} n_x \;+\;
    \!\!\sum_{\substack{x, y \in \mathbb{Z}^d \\ x \sim y}}\!\! f(n_x, n_y),
    \label{eq:our-Hamiltonian}
\end{equation}
and we impose four conditions on $f$, see Section~\ref{sec:system} that are in particular satisfied for original model~\eqref{eq:HHKLP-Hamiltonian} upon multiplying the Hamiltonian by a scale factor. 

Our main result (Theorem~\ref{thm:main}) states that for sufficiently small inverse temperature $\beta$, there are at least two distinct infinite-volume Gibbs states, corresponding to the two ways to color the lattice into a checkerboard, as will be explained in  Section~\ref{sec:system}.


While we were preparing this paper, another paper appeared on arxiv achieving a very similar goal: 
Andriolo, Nguyen, Richards, and
Sulejmanpasic~\cite{AndrioloEtAl2025} gave a rigorous proof of  ordering for \eqref{eq:HHKLP-Hamiltonian} with $p > 1$
on any bipartite lattice in dimension $d \geq 2$, using a Peierls-type domain-wall
argument combined with a rigorous asymptotic formula for the high-temperature cluster
partition function. Our paper offers a  different perspective on the problem, as we reduce the model to one that can be treated by Pirogov-Sinai theory. 


\section{Model and result}\label{sec:system}

\subsection{Lattice and configurations}

Fix an integer $d \geq 2$. We work on the hypercubic lattice $\mathbb{Z}^d$ equipped with the nearest-neighbour graph structure: $x \sim y$ if and only if $|x - y|_1 = 1$. The lattice is bipartite; we fix its natural two-colouring
\begin{equation}
    \mathbb{Z}^d \;=\; \mathcal{E}^1 \sqcup \mathcal{E}^2, \qquad  \mathcal{E}^1\,:=\, \{x \in \mathbb{Z}^d : x_1 + \cdots + x_d \text{ even}\}, \qquad \mathcal{E}^2 \,:=\, \mathbb{Z}^d \setminus \mathcal{E}^1.
\end{equation}
For any $S \subseteq \mathbb{Z}^d$, a set of occupation numbers on $S$ is an element $n_S = (n_x)_{x \in S} \in \mathbb{N}_0^S$, where $\mathbb{N}_0 := \{0, 1, 2, \ldots\}$. For each $n_S$, we can define the configuration $\omega_S=\omega_S(n_S) \in \{0,1\}^S$ as
\begin{equation}
    \omega_x \,:=\, \mathbf{1}[n_x \geq 1], \qquad x \in S.
\end{equation}
We denote by $d_\infty$ the $\ell^\infty$ metric on $\mathbb{Z}^d$.

The two checkerboard configurations $\omega^1, \omega^2 \in \{0,1\}^{\mathbb{Z}^d}$ are 
\begin{equation}
    \omega^1_x \,:=\, \mathbf{1}[x \in \mathcal{E}^1], \qquad \omega^2_x \,:=\, \mathbf{1}[x \in \mathcal{E}^2] \,=\, 1 - \omega^1_x.
\end{equation}
 We refer to them as the \emph{reference configurations} of type $\# \in \{1,2\}$. The translation $x \mapsto x + e_1$ exchanges $\omega^1 \leftrightarrow \omega^2$, realising a $\mathbb{Z}_2$ symmetry that we will exploit in Section~\ref{sec:cluster-expansion}.
 For later use, we also define \emph{blocks} of the form
 \begin{equation}\label{eq: blocks}
    B=B_i \;:=\; 2i + \{0,1\}^d, \qquad \text{for some $ i \in \mathbb{Z}^d$},
\end{equation}
Any region $\Lambda\subset\Z^d$ that is a union of blocks is called a \emph{block region}. 
\subsection{Finite-volume Hamiltonians and Gibbs measures}

Fix a function
\begin{equation}
    f : \mathbb{N}_0 \times \mathbb{N}_0 \;\to\; [0, \infty)
\end{equation}
satisfying the following assumptions, for all $m, n \in \mathbb{N}_0$:
\begin{enumerate}
    \item[(F1)] \emph{Symmetry:} $f(m,n) = f(n,m)$.
    \item[(F2)] \emph{Vanishing on empty sites:} $f(0,n) = 0$.
    \item[(F3)] \emph{Strict penalty on occupied pairs:} $f(m,n) > 0$ whenever $m, n \geq 1$.
    \item[(F4)] $N(T) = \#\bigl\{(m,n) \in \mathbb{N}^2 : (m+n)+f(m,n) \leq T\bigr\}=O(T^\alpha) $ for some $\alpha\in [0,1)$
\end{enumerate}

We define the Hamiltonian on a finite volume $\Lambda \subset\Z^d$ as
\begin{equation}
    H_\Lambda(n_\Lambda)=\sum_{x\in \Lambda}n_x+\sum_{\substack{x,y\in \Lambda\\ x\sim y}}f(n_x,n_y)
\end{equation}
To introduce the boundary conditions, we define, for any finite region $\Lambda$, the \emph{collar} of $\Lambda$ as
\[
    \Lambda^{\mathrm{col}} \;:=\; 
    \bigl\{ x \in \Lambda : d_\infty(x,\, \mathbb{Z}^d \setminus \Lambda) \leq 6 \bigr\}.
\]
The  finite-volume Gibbs measures $\mu^\#_{\Lambda,\beta}$ that we consider are defined as 
\begin{equation}\label{eq:measure}
\mu^\#_{\Lambda,\beta}(n_\Lambda) \;:=  \frac{1}{Z^\#_{\Lambda,\beta}} \;
\begin{cases}
    e^{-\beta H_\Lambda(n_\Lambda)} & \text{if } \quad \omega^{}_{\Lambda^{\text{col}}}(n_\Lambda)=\omega^\#_{\Lambda^{\text{col}}} \\
    0 & \text{otherwise}
\end{cases}.
\end{equation}
where $\omega^{}_{\Lambda^{\text{col}}}(n_\Lambda)$ denotes the configuration in $\Lambda^{\text{col}}$ derived from $n_\Lambda$, and where $Z^\#_{\Lambda,\beta}$ is a normalizing factor that makes $\mu^\#_{\Lambda,\beta}$ into a probability measure.

Our main result is 

\begin{theorem}
\label{thm:main}
Under assumptions (F1)--(F4), there exist a function
$\epsilon : (0,\infty) \to (0,\infty)$ with $\epsilon(\beta) \to 0$ as
$\beta \to 0$, such that for every $\beta >0$, every
$\# \in \{1,2\}$, every finite block region $\Lambda \subset \mathbb{Z}^d$, and every
$x \in \Lambda$,
\begin{equation}
\bigl| \E^\#_{\Lambda,\beta}
(\omega_x) - \omega^\#_x\bigr| \leq \epsilon(\beta),
\label{eq:main}
\end{equation}
uniformly in $\Lambda$ and $x$, and with $\E^\#_{\Lambda,\beta}$ the expectation w.r.t.\ to $\mu^\#_{\Lambda,\beta}$.  In particular, the infinite-volume limit points of the measures $\mu^1_{\Lambda,\beta}, \mu^2_{\Lambda,\beta}$ are distinct for sufficiently small $\beta$.
\end{theorem}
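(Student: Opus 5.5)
Our plan is to integrate out the occupation numbers given the configuration $\omega$ and reduce to a discrete contour model on $\{0,1\}^{\Lambda}$ that Pirogov--Sinai theory can handle, with $\beta$ playing the role of the small parameter. For fixed $\omega_\Lambda$, the weight $\sum_{n:\omega(n)=\omega}e^{-\beta H_\Lambda(n)}$ factorizes over the occupied sites only through the bonds with both endpoints occupied, since by (F2) $f$ vanishes whenever one endpoint is empty. Isolated occupied sites (all neighbours empty) contribute a factor $\sum_{n\ge1}e^{-\beta n}\approx \beta^{-1}$ each. An occupied pair contributes $\sum_{m,n\ge1}e^{-\beta(m+n+f(m,n))}$, which by (F4) and summation by parts is $O(\beta^{-\alpha})$. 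More generally, any connected occupied cluster $C$ with $|C|\ge2$ contributes at most $C_0^{|C|}\beta^{-|C|+(1-\alpha)\lceil |C|/2\rceil}$. To see this, we cover $C$ by a matching of at least $\lceil|C|/2\rceil$ disjoint bonds, or handle stars by a bond plus singletons, and bound every other $f$ from below by $0$. The key heuristic is that each maximal independent occupied set earns $\beta^{-1}$ per site, and the checkerboard configurations $\omega^1,\omega^2$ are the unique maximizers of the number of occupied sites among independent sets, with density $1/2$.

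The first step is to define the effective weight. We set $\rho(\omega)=\beta^{|\omega|}\prod_{\text{clusters}}Z_C$, normalized relative to the reference configuration. For a block $B$, we call it \emph{correct} of type $\#$ if $\omega_{B'}=\omega^\#_{B'}$ on $B$ and its neighbouring blocks, and \emph{incorrect} otherwise. Contours are then connected components (in the $d_\infty$ sense) of incorrect blocks, labelled by the types on their boundary. The collar of width $6$ in \eqref{eq:measure} guarantees that every contour lies strictly inside $\Lambda$ with $\#$-boundary. Since the reference configurations are symmetric under $x\mapsto x+e_1$, the truncated free energies of the two phases coincide. This lets us use the symmetric version of Pirogov--Sinai theory, without the need for metastable phases.

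The second step, which we expect to be the main obstacle, is the Peierls bound. We need that the weight of a contour $\gamma$, relative to the reference configuration on its support, is at most $(c\,\beta^{\kappa})^{|\gamma|}$ with $\kappa=\min(1-\alpha,\cdot)>0$ after summing over the interior configurations. We will argue block by block. An incorrect block either contains an occupied nearest-neighbour pair, giving a loss of a factor at least $\beta^{1-\alpha}$ relative to independent occupancy, or is independent but not locally checkerboard. In the latter case a counting argument shows that it has a deficit of occupied sites compared to the density $1/2$. A standard fact used here is that an independent set in $\mathbb Z^d$ that equals neither checkerboard on a region loses at least a constant times the boundary size of the interface between the two patterns, or else has empty sites. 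Each missing site costs a factor $\beta$. The delicate point is making this deficit count uniform and local, i.e.\ proportional to the number of incorrect blocks. We will do this by a discharge argument on $2\times\cdots\times2$ blocks. In such a block an independent set has at most $2^{d-1}$ sites, with equality only for the two checkerboard patterns, and adjacent blocks with different patterns force a deficit in a bounded neighbourhood. We will also have to control the entropy: the number of $\omega$ per incorrect block is at most $2^{2^d\cdot 3^d}$, and this is absorbed by taking $\beta$ small.

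Finally, with the Peierls bound $\beta^{\kappa|\gamma|}$ in hand, the cluster expansion converges for small $\beta$, and the standard contour estimate yields the result. The probability that $x$ lies in or inside a contour is bounded by $\sum_{\gamma\ni x\text{ or surrounding }x}e^{-(\tau-c)|\gamma|}=:\epsilon(\beta)\to0$, where $\tau\sim\kappa\log(1/\beta)$. Outside contours $\omega_x=\omega^\#_x$, so this gives \eqref{eq:main} uniformly in $\Lambda$ and $x$. Since $\omega^1_x\ne\omega^2_x$, the bound with $\epsilon(\beta)<1/2$ separates every infinite-volume limit point, which proves the last claim. For $\beta$ not small we simply set $\epsilon(\beta)=1$.
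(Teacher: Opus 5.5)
Your architecture is the paper's: $2\times\cdots\times2$ blocks, correctness defined through neighbouring blocks, a Peierls bound driven by $Z_{\mathrm{dim}}=O(\beta^{-\alpha})$ from (F4), the $x\mapsto x+e_1$ symmetry to equate the two truncated free energies, and a union bound over contours containing or surrounding $x$. Setting $\epsilon(\beta)=1$ for large $\beta$ is also fine.

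\textbf{The gap is the quantitative input to the Peierls step, which you yourself flagged as the main obstacle.} Your cluster estimate $C_0^{|C|}\beta^{-|C|+(1-\alpha)\lceil|C|/2\rceil}$ has two problems.

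\emph{It is not justified.} A connected cluster need not contain $\lceil|C|/2\rceil$ disjoint bonds. Odd clusters cannot, and a star has matching number $1$.

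\emph{More seriously, it has the wrong baseline.} A matched bond replaces $Z_{\mathrm{sing}}^2\approx\beta^{-2}$ by $Z_{\mathrm{dim}}=O(\beta^{-\alpha})$, a gain of $\beta^{2-\alpha}$. Your formula credits each bond only $\beta^{1-\alpha}$ against $\beta^{-|C|}$.
\begin{itemize}
\item For $|C|=2$ it gives $\beta^{-1-\alpha}$, weaker than the $O(\beta^{-\alpha})$ you derived one line earlier.
\item For a fully occupied region of $b$ blocks it gives $\beta^{-\alpha 2^{d-1}b}$ times the reference weight $Z_{\mathrm{sing}}^{2^{d-1}b}$, i.e.\ no suppression at all.
\end{itemize}
So overfilled contours are not controlled by anything you wrote. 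Your later claim that a block with an occupied pair loses $\beta^{1-\alpha}$ ``relative to independent occupancy'' is asserted, not derived.

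\textbf{The missing ingredient is a per-block matching bound measured against the right baseline.} This is how the paper proves its Peierls bound.
\begin{itemize}
\item \emph{Matching bound.} If a block $B$ has $N$ occupied sites and $M$ is a matching in the occupied subgraph, dropping all other interactions (allowed since $f\ge 0$) gives $Z_B(\omega)\le Z_{\mathrm{sing}}^{N-2|M|}Z_{\mathrm{dim}}^{|M|}$.
\item \emph{Choice of matching.} The cube $\{0,1\}^d$ is bipartite with independence number $2^{d-1}$, attained only by the two checkerboards. By K\"onig's theorem the occupied subgraph therefore has a matching with $|M|\ge N-2^{d-1}$.
\item \emph{Defect blocks.} Hence every defect block satisfies $Z_B(\omega)\le Z_{\mathrm{sing}}^{2^{d-1}}\cdot O(\beta^{1-\alpha})$. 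This covers underfilled blocks, blocks that are exactly filled but not checkerboard (which must contain a bond), and overfilled blocks in one argument.
\item \emph{Opposite-type pairs.} Face-adjacent non-defect blocks of opposite checkerboard type always have an occupied bond across their common face. This gives $Z_{B\cup B'}\le Z_{\mathrm{sing}}^{2^d}\,Z_{\mathrm{dim}}/Z_{\mathrm{sing}}^2$.
\item \emph{Partition and counting.} Partition $\bar\gamma$ into the defect blocks $D$, a maximal disjoint family $\mathcal{P}$ of such opposite-type pairs, and the remaining blocks. Every incorrect block lies within $d_\infty$-distance $1$ of $D$ or of a paired block, by face-connectivity of its $3^d$-neighbourhood and maximality of $\mathcal{P}$. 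Hence $|\bar\gamma|\le 5^d(|D|+2|\mathcal{P}|)$.
\end{itemize}
Together these give $Z_{\bar\gamma}(\gamma)/Z^0_{\bar\gamma}\le (p\beta^{1-\alpha})^{|\bar\gamma|/(2\cdot 5^d)}$, a surface tension of order $|\log\beta|$.

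This is exactly the local ``discharge'' you were after, and it makes your global fact about independent sets and interface lengths unnecessary. With this replacement, the rest of your plan goes through as in the paper.
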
 
As announced, the $\omega$ configurations singled out by our boundary conditions, concentrate on the checkerboard configurations $\omega^\#$ introduced above. 
The rest of the paper is devoted to the proof of this result.


\section{Contours}
\label{sec:contours}

To analyse fluctuations around the reference configurations $\omega^1, \omega^2$, we set up a Pirogov--Sinai contour formalism on a lattice consisting of the blocks defined in \eqref{eq: blocks}.   On this block-lattice the two reference configurations become constant, and the standard contour machinery for systems with finitely many translation-invariant ground states applies.

\subsection{The block-lattice}
\label{subsec:macro-lattice}
The blocks $B$ from \eqref{eq: blocks} give a partition of any block region. We view these blocks as sites of a 
 \emph{block-lattice} that we denote by $\mathbb{L}$ and we keep denoting its elements by $B$. 
  We equip $\mathbb{L}$ with the $\ell^\infty$ metric, which we still denote $d_\infty$. 


The reference configurations $\omega^1, \omega^2 \in \{0,1\}^{\mathbb{Z}^d}$ of Section~\ref{sec:system} are such that they are constant and distinct when viewed as functions of the blocks in $\bbL$.  




We say that a block $B \in \bbL$  is \emph{$\#$-correct in the configuration $\omega$} if
\begin{equation}
    \omega_{B'} \,=\,\omega^\#_{B'} \qquad \text{for every block $B'$ such that $d_\infty(B,B')\leq 1 $}.
\end{equation}
A block is \emph{correct} if it is $\#$-correct for some $\# \in \{1,2\}$, and \emph{incorrect} otherwise. The set of incorrect blocks in $\omega $ is denoted by $ I(\omega)$.

\begin{definition}[Thickened set of incorrect blocks]
\label{def:boundary}
The thickened incorrect set of $\omega$ is defined as
\begin{equation}
    \mathcal{I}(\omega) \;:=\; \{ B \in \bbL: d_{\infty}(B,I(\omega))\leq 1\}.
\end{equation}
\end{definition}

The following statement is crucial for what follows. It is an immediate consequence of the above definitions. 
\begin{figure}[h]
    \centering
    \includegraphics[width=\textwidth]{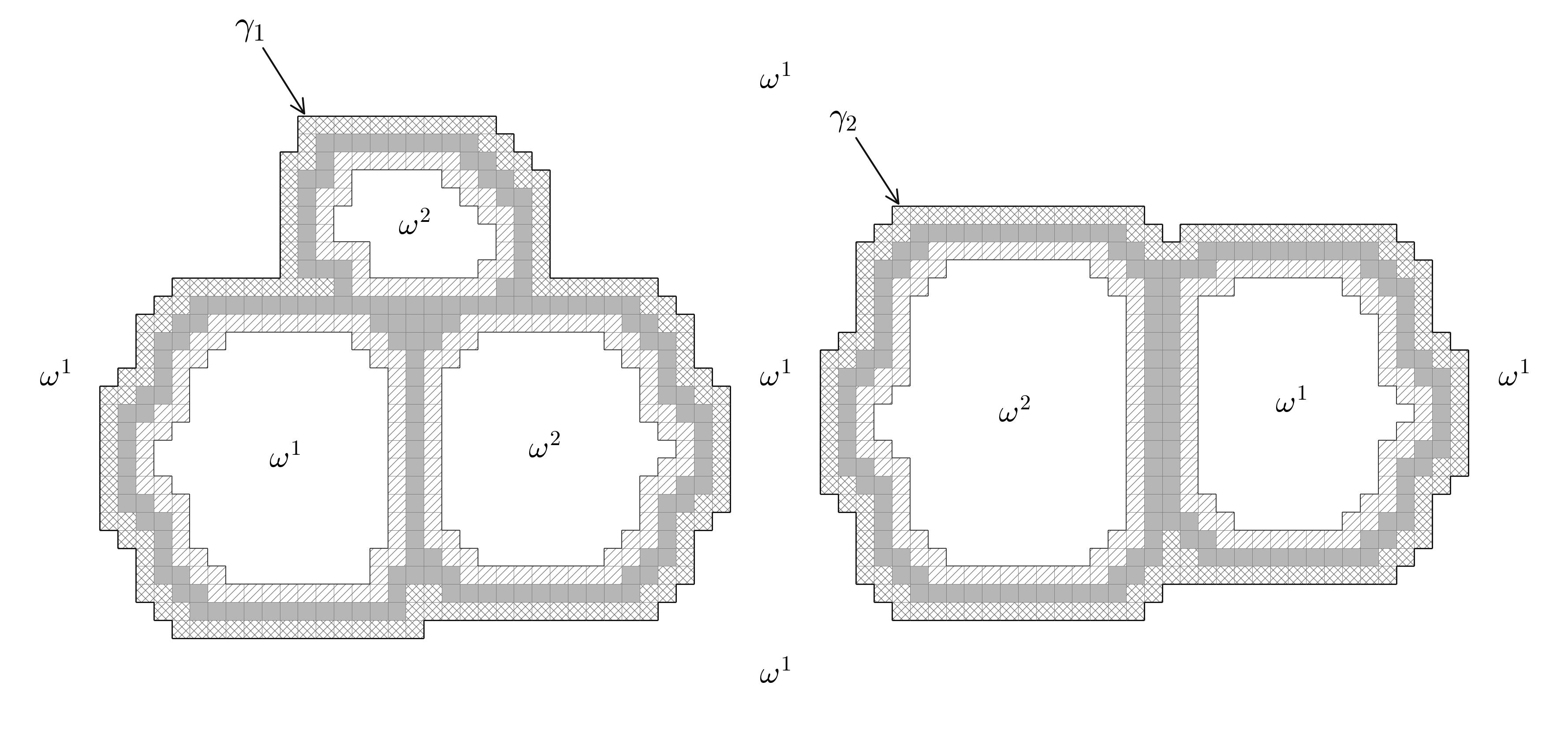}
    \caption{A configuration with two contours $\gamma_1$ and $\gamma_2$ with the set of incorrect blocks $I(\omega)$ denoted by the dark grey region and the patterned regions representing boundaries of the contours. The outer boundary has the configuration $\omega^1$ and the interior boundary has the same configuration as that of the interior components along which they are drawn.}
    \label{fig:mesh1}
\end{figure}
\begin{lemma}
\label{lem:correct-type}
Let $C \subseteq \mathbb{L} \setminus \mathcal{I}(\omega)$ be non-empty and $d_\infty$-connected. Then every block of $C$ is correct, and all blocks of $C$ share the same type.
\end{lemma}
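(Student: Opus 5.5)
The proof has two steps: first show that every block of $C$ is correct, then propagate the type along paths in $C$.

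\emph{Every block of $C$ is correct.} If $B\in C$ were incorrect, then $B\in I(\omega)$. Hence $d_\infty(B,I(\omega))=0\le 1$, so $B\in\mathcal I(\omega)$, contradicting $C\subseteq \mathbb L\setminus\mathcal I(\omega)$.

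\emph{The type is unique.} For a correct block, the type $\#$ is determined unambiguously. On the block $B$ itself (distance $0$), $\#$-correctness forces $\omega_B=\omega^\#_B$. Since $\omega^1$ and $\omega^2$ are constant and distinct on blocks ($\omega^1_B\neq\omega^2_B$ for every $B$), $B$ cannot be both $1$-correct and $2$-correct. So each block of $C$ has a well-defined type $\#(B)$.

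\emph{The type propagates.} Let $B,B'\in C$ with $d_\infty(B,B')=1$; this is the only case needed, since $C$ is $d_\infty$-connected and any two blocks are joined by a chain of such neighbouring pairs in $C$. By definition, $\#(B)$-correctness of $B$ prescribes $\omega$ on every block within distance $1$ of $B$. In particular it gives $\omega_{B'}=\omega^{\#(B)}_{B'}$. Likewise $\#(B')$-correctness of $B'$ applied to $B'$ itself gives $\omega_{B'}=\omega^{\#(B')}_{B'}$. Because $\omega^1_{B'}\neq\omega^2_{B'}$, these two identities force $\#(B)=\#(B')$. Induction along the chain then shows that all blocks of $C$ share one type.

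There is no real obstacle here. The only point needing care is the distinctness of $\omega^1_B$ and $\omega^2_B$ on each single block. It holds because every block $2i+\{0,1\}^d$ contains both even and odd sites, on which $\omega^1$ and $\omega^2$ take opposite values.

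Note that the thickening in $\mathcal I(\omega)$ is not used. Correctness alone, which rules out $I(\omega)$, already suffices for this lemma. The thickening will matter later, when contour boundaries are required to carry compatible correct labels.
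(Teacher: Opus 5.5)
Your proof is correct. It is exactly the unpacking of definitions the paper intends when it calls the lemma an immediate consequence: $I(\omega)\subseteq\mathcal I(\omega)$ gives correctness, and the type propagates along $d_\infty$-neighbouring pairs because $\#(B)$-correctness of $B$ fixes $\omega_{B'}$ while $\omega^1_{B'}\neq\omega^2_{B'}$. Your remark that only $C\cap I(\omega)=\emptyset$ is needed, not the thickening, is also accurate.
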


\subsection{Contours}
\label{subsec:contours}
From now onwards, we restrict attention to configurations $\omega$ such that $\caI(\omega)\subset \bbL$ is finite.  We list a few definitions.

\begin{definition}[Contours]
\label{def:contour}
For each maximal $d_\infty$-connected component $\bar\gamma$ of $\mathcal{I}(\omega)$, the associated \emph{contour} is the pair
\begin{equation}\label{def: contour}
    \gamma \,:=\, (\bar\gamma,\, \omega_{\bar\gamma}).
\end{equation}
We call $\bar\gamma$ the \emph{support} of $\gamma$. 
We write $\Gamma(\omega)$ for the set of contours $\gamma$ determined in this way. 
\end{definition}
We will  use the term contour for $\gamma$ of the form \eqref{def: contour} with $\bar\gamma$ connected, even if it is not (explicitly) derived from a global $\omega$.

For a finite $A \subset \mathbb{L}$,  we define the exterior boundary as
$$
\partial^{\mathrm{ex}} A := \{B \in \mathbb{L} \setminus A : d_\infty(B, A) = 1\}.
$$

\begin{definition}[Exterior and type]
\label{def:interior-exterior-type}
Let $\gamma = (\bar\gamma, \omega_{\bar\gamma})$ be a contour (with $\bar\gamma$ finite). The complement $\mathbb{L} \setminus \bar\gamma$ decomposes into finitely many maximal $d_\infty$-connected components; exactly one of them is unbounded, and we denote it by $A_0$. We denote the remaining (bounded) components by $A_1, \ldots, A_k$, and set
\begin{equation}
    \mathrm{ext}\,\gamma \,:=\, A_0, \qquad \mathrm{int}\,\gamma \,:=\, \bigsqcup_{j=1}^k A_j.
\end{equation}

The boundary set $\partial^{\mathrm{ex}} A_j$ is $d_\infty$-connected for every $j \in \{0, 1, \ldots, k\}$, and there exists a unique type $\#$ for each $A_j$, denoted by $\#(A_j)$ such that 
\begin{equation}
    \omega_B \;=\; \omega^{\#}_B\qquad \text{for every } B \in \partial^{\mathrm{ex}} A_j.
\end{equation}
Note that $\partial^{\mathrm{ex}} A_j \subseteq \bar\gamma$ for every $j$, so the types are determined by the contour data $\omega_{\bar\gamma}$.

The interior of $\gamma$ refines into the disjoint union of its labelled parts:
\begin{equation}
    \mathrm{int}\,\gamma \,=\, \mathrm{int}_1\,\gamma \,\sqcup\, \mathrm{int}_2\,\gamma, \qquad \mathrm{int}_\#\,\gamma \,:=\, \bigsqcup_{j:  \#(A_j)=\#} A_j.
\end{equation}
The {type} of $\gamma$ is defined as
\begin{equation}
   \#(\gamma) \,:=\, \#(\mathrm{ext}\,\gamma) \,\in\, \{1, 2\}.
\end{equation}
\end{definition}
We note that every  contour in $\Gamma_{\mathrm{ext}}(\omega)$ has the same type $\#$, that can depend on $\omega$.

\begin{definition}[External contour]
\label{def:external}
Let $\Gamma(\omega)$ be the set of contours derived from $\omega$. A contour $\gamma \in \Gamma(\omega)$ is \emph{external} if there is no $\gamma' \in \Gamma(\omega) \setminus \{\gamma\}$ with $\bar\gamma \subseteq \mathrm{int}\,\gamma'$. We denote the set of external contours for $\omega$ by $\Gamma_{\mathrm{ext}}(\omega) \subseteq \Gamma(\omega)$ and , for a block region $\Lambda$,
\begin{equation}
\Lambda_{\mathrm{ext}}(\omega) := \Lambda \cap\left( \bigcap_{\gamma \in \Gamma_{\mathrm{ext}}(\omega)} \mathrm{ext}\gamma\right).
\end{equation}

\end{definition}

\begin{definition}[Compatibility of same-type contours]
\label{def:compatibility}
Two contours $\gamma_1, \gamma_2$ of the \emph{same type} are \emph{compatible}, written $\gamma_1 \sim \gamma_2$, if their supports satisfy
\begin{equation}
    d_\infty(\bar\gamma_1, \bar\gamma_2) \,>\, 1.
\end{equation}
A family $\Gamma$ of contours of the same type is \emph{compatible} if every pair of distinct elements of $\Gamma$ is compatible. 
\end{definition}

Note that  a mutually compatible family $\Gamma$ is not necessarily realizable  as $\Gamma(\omega)$ for some $\omega$.
In contrast, if a family of external contours $\Gamma_{\mathrm{ext}}$ is compatible, then it is can be realized as $\Gamma_{\mathrm{ext}}(\omega)$ for some $\omega$.

\section{Partition functions}\label{sec:partitionfunc}
\subsection{Polymer representation of partition function}
For any block region $\Lambda$ and configuration $\omega$, we define the partition functions
\begin{equation}
    Z_\Lambda(\omega) := \sum_{n_\Lambda: \, \omega_\Lambda(n_\Lambda)=\omega_\Lambda} e^{-\beta H_\Lambda(n_\Lambda)},
\end{equation}
and 
\begin{equation}
    Z^\#_\Lambda(\omega) := \begin{cases} Z_\Lambda(\omega) & \text{if} \, \omega_{\Lambda^{\text{col}}}=\omega^\#_{\Lambda^{\text{col}}}  \\
    0 & \text{otherwise}
    \end{cases},
\end{equation}
and we note that
$$
  Z^\#_\Lambda =\sum_{\omega_\Lambda \in \{0,1\}^{\Lambda}}   Z^\#_\Lambda(\omega_\Lambda) 
$$
where $Z^\#_\Lambda$ was the normalizing factor in \eqref{eq:measure}. Finally, we also
 define the partition function over the reference configuration as
\begin{equation}
    Z^0_\Lambda:=Z_\Lambda(\omega^\#).
\end{equation}
and, since $\Lambda$ is a block region, we can check that 
$Z^0_\Lambda$ is independent of $\#$, see Section \ref{sec:system} for an explicit expression.  

\begin{lemma}[Factorization across external contours]
\label{lem:factorization}
For any configuration $\omega$,
\begin{equation}
   Z^{\#}_{\Lambda}(\omega) \;=\; Z^0_{\Lambda_{\mathrm{ext}}}(\omega)\,
   \prod_{\gamma \in \Gamma_{\mathrm{ext}}(\omega) } Z_{\overline{\gamma}}(\omega_{\bar \gamma}) \prod_{j=1}^{k} Z^{\#(A_j)}_{A_j}(\omega).
\end{equation}
\end{lemma}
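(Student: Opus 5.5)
The plan is to read $Z^{\#}_\Lambda(\omega)$ as a sum over occupation numbers $n_\Lambda$ with $\omega_\Lambda(n_\Lambda)=\omega_\Lambda$, and to show that the Boltzmann weight factorizes over a partition of $\Lambda$. The key observation is that $f(0,\cdot)=0$ by (F2), so a nearest-neighbour bond $x\sim y$ contributes to $H_\Lambda$ only if $\omega_x=\omega_y=1$. It therefore suffices to show two things: the regions in the proposed factorization partition $\Lambda$, and no bond with both endpoints occupied crosses between two different regions.

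First I verify the partition. If $\omega_{\Lambda^{\mathrm{col}}}\neq\omega^\#_{\Lambda^{\mathrm{col}}}$, both sides vanish: the left by definition, the right because the collar condition fails on some piece. This uses that the collar has width $6$, so the incorrect blocks and their thickening stay away from $\partial\Lambda$; I take this to be the intended reading of the boundary factors. Otherwise all blocks within distance about $3$ of $\mathbb{L}\setminus\Lambda$ are $\#$-correct. Hence every external contour has $\bar\gamma\subset\Lambda$ and type $\#$, and the exterior connected component containing the collar is correct of type $\#$ by Lemma~\ref{lem:correct-type}. Then $\Lambda$ is the disjoint union of
\[
\Lambda_{\mathrm{ext}}(\omega),\qquad \bar\gamma \ \ (\gamma\in\Gamma_{\mathrm{ext}}(\omega)),\qquad \text{the interior components } A_j \text{ of the external contours.}
\]
This follows from the nesting structure in Definitions~\ref{def:interior-exterior-type} and~\ref{def:external}: an external $\bar\gamma$ is not inside any other interior, and distinct external contours are compatible, so their supports are at $d_\infty$-distance greater than $1$.

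Next I show that the constraint $\omega_\Lambda(n_\Lambda)=\omega_\Lambda$ splits into independent constraints on each piece. On $\Lambda_{\mathrm{ext}}$ every block lies outside $\mathcal I(\omega)$, or more precisely in the exterior of all external contours. By Lemma~\ref{lem:correct-type} and connectivity through the collar, it equals $\omega^\#$ there, which gives the factor $Z^0_{\Lambda_{\mathrm{ext}}}$. Inside each $A_j$, the boundary $\partial^{\mathrm{ex}}A_j\subset\bar\gamma$ carries type $\#(A_j)$. Moreover, the blocks of $A_j$ adjacent to $\bar\gamma$ are not in $\mathcal I$, so they are correct of type $\#(A_j)$, and this holds up to distance $1$ inside. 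This is the collar condition for $Z^{\#(A_j)}_{A_j}$, with the caveat that the paper's collar has width $6$ in sites, which is $3$ blocks. I would check that the thickening by $1$ in Definition~\ref{def:boundary}, combined with correctness reaching one further block, gives enough margin; otherwise I would interpret $Z^{\#(A_j)}_{A_j}(\omega)$ as simply $Z_{A_j}(\omega)$, where the indicator is automatically $1$.

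Finally, the bond decoupling. Any bond $x\sim y$ with $x,y$ in different pieces has both sites within one block of a boundary layer $\partial^{\mathrm{ex}}A$ or of the boundary of $\bar\gamma$. There $\omega$ agrees with a checkerboard $\omega^{\#'}$ on a neighbourhood containing both $x$ and $y$, because the adjacent blocks are correct. Since $x,y$ are nearest neighbours of opposite parity, one of them is empty, so $f(n_x,n_y)=0$. The sum of $n_x$ over sites is additive. Hence $e^{-\beta H_\Lambda}$ factorizes, and summing independently over each piece gives the claimed product.

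The main obstacle is the bookkeeping in this last step: showing that across every interface both endpoints of a crossing bond lie in the same correct checkerboard region, so the bond is one occupied–empty pair. I would handle it by noting that each interface block pair lies in a correct block's $1$-neighbourhood, where $\omega$ is exactly checkerboard.
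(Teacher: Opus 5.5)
Your proposal is correct and is the paper's own argument written out in detail: expand $Z^\#_\Lambda(\omega)$ over $n_\Lambda$ and discard every bond between $\bar\gamma$ and its complement, which costs nothing by (F2). The adjacent block outside $\bar\gamma$ is not in $\mathcal{I}(\omega)$, hence correct, so both endpoints lie in a checkerboard region and one of them is empty.

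The collar margin you left open closes exactly, with no reinterpretation of $Z^{\#(A_j)}_{A_j}$ needed. Blocks of $A_j$ adjacent to $\bar\gamma$ are at $d_\infty$-distance at least $2$ from $I(\omega)$, so all blocks of $A_j$ within distance $2$ of $\bar\gamma$ are correct of type $\#(A_j)$, and therefore $\omega=\omega^{\#(A_j)}$ on the blocks within distance $3$, which is precisely the $6$-site collar. On the other hand, your claim that the right-hand side vanishes when the outer collar condition fails is unjustified; the identity should simply be read for $\omega$ satisfying that condition, which is the only case the paper uses.
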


\begin{proof}
Each of these partition functions is a sum over occupation numbers $n_\Lambda$ compatible with the given configuration $\omega$. The crucial insight is that the definition of the contours $\gamma$ allows us to drop the interaction terms in the Hamiltonian between sites in $\overline{\gamma}$ and $\overline{\gamma}^c$. Therefore the partition function factorizes over these regions.
\end{proof}

We now reorganize this sum by summing first over compatible families $\Gamma_{\mathrm{ext}}$ of external contours, and then over all configurations $\omega_{\Lambda}$ such that   $\Gamma_{\mathrm{ext}}(\omega)=\Gamma_{\mathrm{ext}}$. 
This yields
\begin{equation}
Z^\#_\Lambda=\sum_{\Gamma_{\mathrm{ext}}}Z^0_{\Lambda_{\mathrm{ext}}}\prod_{\gamma\in \Gamma_{\mathrm{ext}}} Z_{\overline{\gamma}}(\gamma) \prod_{j=1}^{k} Z^{\#(A_j)}_{A_j},
\end{equation}
where we abbreviated $Z_{\overline\gamma}(\gamma)=Z_{\overline\gamma}(\omega_{\overline\gamma})$.
Dividing by $Z^0_\Lambda$ and using the analogous factorization of the reference partition function,
\begin{equation}
    \frac{Z^{\#}_{\Lambda}}{Z_\Lambda^0} \;=\; \sum_{\Gamma_{\mathrm{ext}}} \prod_{\gamma \in \Gamma_{\mathrm{ext}}} \frac{Z_{\overline{\gamma}}(\gamma)}{Z^0_{\overline{\gamma}}} \prod_{j=1}^{k} \frac{Z^{\#(A_j)}_{A_j}}{Z^0_{A_j}}.
\end{equation}
Denoting $\frac{Z^{\#}_{\Lambda}}{Z_\Lambda^0}$ by $\Xi^\#_\Lambda$, this reads
\begin{equation}
     \Xi^\#_\Lambda\;=\; \sum_{\Gamma_{\mathrm{ext}}} \prod_{\gamma \in \Gamma_{\mathrm{ext}}} \frac{Z_{\overline{\gamma}}(\gamma)}{Z^0_{\overline{\gamma}}} \prod_{j=1}^{k} \Xi^{\#(A_j)}_{A_j}.
\end{equation}
Multiplying and dividing each interior factor by $\Xi^{\#}_{A_j}$ yields
\begin{equation}
    \Xi^\#_\Lambda \;=\; \sum_{\Gamma_{\mathrm{ext}}} \prod_{\gamma \in \Gamma_{\mathrm{ext}}} w^\#(\gamma) \prod_{j=1}^{k} \Xi^\#_{A_j},
\end{equation}
where the contour weight is
\begin{equation}
    w^\#(\gamma) \;:=\; \frac{Z_{\overline{\gamma}}(\gamma)}{Z^0_{\overline{\gamma}}} \prod_{j=1}^{k} \frac{\Xi^{\#(A_j)}_{A_j}}{\Xi^{\#}_{A_j}}.
\end{equation}
This sets up a recursion; iterating over all nesting levels collapses the right-hand side to a purely geometric polymer model
\begin{equation}\label{eq:polymer}
    \Xi^\#_\Lambda \;=\; \sum_{\Gamma\,\text{compatible}} \prod_{\gamma \in \Gamma} w^\#(\gamma),
\end{equation}
where we stressed in the sum that $\Gamma$ are compatible. 
\subsection{Dimer estimate}
First, we check that for any block region $\Lambda$
$$
 Z^0_\Lambda= (Z_{\mathrm{sing}})^{|\Lambda|/2}, \qquad   Z_{\mathrm{sing}} = \sum_{n \geq 1} e^{-\beta n} = \frac{1}{e^{\beta }-1}.
$$
For later use, we note that
\begin{equation}\label{eq:Zsingasymp}
    Z_{\mathrm{sing}}=\frac{1}{\beta }\bigl(1 + O(\beta)\bigr), \qquad \text{as}
\,  \beta\to 0  \end{equation}
By a dimer, we mean a pair of adjacent sites where the configuration is $1$. The resulting partition function is 
$$Z_{\mathrm{dim}} = \sum_{m,n \geq 1} e^{-\beta (m+n) - \beta f(m,n)}.$$

\begin{lemma}[Dimer excess]\label{lem:dimer-excess}
 For $\beta$ sufficiently small and for some constant $p$,
\begin{equation}\label{eq: claim lemma dimer}
    \frac{Z_{\mathrm{dim}}}{Z_{\mathrm{sing}}} \;\leq\; p\, \beta^{1-\alpha}.
\end{equation}
\end{lemma}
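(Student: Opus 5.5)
The plan is to bound $Z_{\mathrm{dim}}$ by a layer-cake (summation by parts) argument driven by the counting function $N(T)$ from (F4), and then divide by $Z_{\mathrm{sing}}$ using \eqref{eq:Zsingasymp}. For $(m,n)\in\mathbb{N}^2$ set $E(m,n):=(m+n)+f(m,n)$, so that $Z_{\mathrm{dim}}=\sum_{m,n\ge1}e^{-\beta E(m,n)}$. Since $f\ge 0$ we have $E(m,n)\ge 2$, and each $e^{-\beta E}$ can be written as $\int_{E}^\infty \beta e^{-\beta T}\,dT$. Exchanging sum and integral (Tonelli, all terms nonnegative) gives
\begin{equation*}
    Z_{\mathrm{dim}} \;=\; \int_0^\infty \beta e^{-\beta T}\, N(T)\, dT .
\end{equation*}

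Next we insert (F4). It gives a constant $c$ and a threshold $T_0$ with $N(T)\le cT^\alpha$ for $T\ge T_0$. For $T<T_0$, $N(T)\le N(T_0)<\infty$; indeed $N(T)$ is always finite, because $m+n\le E\le T$ allows only finitely many pairs. Hence $N(T)\le C(1+T^\alpha)$ for all $T\ge 0$, for some constant $C$. With the substitution $s=\beta T$,
\begin{equation*}
    Z_{\mathrm{dim}} \;\le\; C\int_0^\infty \beta e^{-\beta T}(1+T^\alpha)\,dT \;=\; C\bigl(1+\beta^{-\alpha}\Gamma(1+\alpha)\bigr)\;\le\; C'\beta^{-\alpha}
\end{equation*}
for $\beta\le 1$, using $\alpha\ge 0$.

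Finally we divide by $Z_{\mathrm{sing}}$. By \eqref{eq:Zsingasymp}, $Z_{\mathrm{sing}}\ge \frac{1}{2\beta}$ for $\beta$ small; in fact $Z_{\mathrm{sing}}=1/(e^\beta-1)\ge 1/(\beta e^\beta)$ holds for every $\beta>0$. This yields $Z_{\mathrm{dim}}/Z_{\mathrm{sing}}\le 2C'\beta^{1-\alpha}$, which is \eqref{eq: claim lemma dimer} with $p=2C'$.

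There is no real obstacle. The one point needing care is that (F4) is only an asymptotic statement, so $N$ must be bounded uniformly on $[0,\infty)$ as done above; this is where the finiteness of $N(T)$ enters. The layer-cake identity could also be replaced by a dyadic decomposition over shells $2^k\le \beta E<2^{k+1}$, giving the same bound.
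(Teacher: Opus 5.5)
Your proposal is correct and follows essentially the same route as the paper: the layer-cake identity $Z_{\mathrm{dim}}=\beta\int_0^\infty e^{-\beta T}N(T)\,dT$, the bound from (F4) giving $Z_{\mathrm{dim}}\lesssim \Gamma(1+\alpha)\beta^{-\alpha}$, and then division by $Z_{\mathrm{sing}}\gtrsim \beta^{-1}$. Your extra step of extending (F4) to a bound valid for all $T\ge 0$ is a harmless refinement of what the paper does implicitly; it could even be written as $N(T)\le K'T^\alpha$ for all $T$, since $N(T)=0$ for $T<2$.
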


\begin{proof}
Let $E_{m,n} = (m+n) + f(m,n)$ and write $e^{-\beta E_{m,n}} = \int_{E_{m,n}}^{\infty} \beta e^{-\beta T}\,dT$. 
By exchanging the order of summation and integration,
\begin{equation}
    Z_{\mathrm{dim}} \;=\; \beta \int_0^\infty e^{-\beta T} \biggl( \sum_{\substack{m,n \ge 1 \\ E_{m,n} \le T}} 1 \biggr) dT \;=\; \beta \int_0^\infty e^{-\beta T} N(T)\, dT.
\end{equation}
From (F4) we have $N(T) \leq K T^\alpha$, hence
\begin{equation}
    Z_{\mathrm{dim}} \;\leq\; K \beta \int_0^\infty e^{-\beta T} T^\alpha \,dT \;=\; K \Gamma(\alpha+1)\, \beta^{-\alpha}.
\end{equation}
Combining this with \eqref{eq:Zsingasymp}, we get the claim \eqref{eq: claim lemma dimer}.
\end{proof}

\subsection{Bound on contour weights}
Given $\omega$, a block $B$ is  called a defect block if $\omega_B\neq \omega_B^\#$ for any $\#$.
\begin{lemma}\label{lem:small-piece}
Let $\omega$ be a configuration and $\beta$ sufficiently small.
\begin{enumerate}
    \item[\rm (i)] If $B \in \mathbb{L}$ is a defect block, then
    \begin{equation}\label{eq:per-block}
        Z_{B}(\omega) \;\leq\; Z_{\mathrm{sing}}^{2^{d-1}}\, \frac{Z_{\mathrm{dim}}}{Z_{\mathrm{sing}}}.
    \end{equation}
    \item[\rm (ii)] If $B, B' \in \mathbb{L}$ are adjacent blocks that are both correct but of opposite type, then
    \begin{equation}\label{eq:per-pair}
        Z_{B \cup B'}(\omega) \;\leq\; Z_{\mathrm{sing}}^{2^d}\, \frac{Z_{\mathrm{dim}}}{Z_{\mathrm{sing}}^2}.
    \end{equation}
\end{enumerate}
\end{lemma}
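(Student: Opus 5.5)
The plan is to bound $Z_B(\omega)$ and $Z_{B\cup B'}(\omega)$ by discarding most of the interaction and factorizing over sites. Three elementary facts drive everything. First, $f\ge 0$, so dropping any subset of the terms $f(n_x,n_y)$ only increases $e^{-\beta H}$. Second, an empty site ($\omega_x=0$) contributes the factor $1$, while a free occupied site contributes $Z_{\mathrm{sing}}$. Third, an occupied nearest-neighbour pair whose interaction term is kept contributes exactly $Z_{\mathrm{dim}}$.

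I will also use two comparisons between $Z_{\mathrm{dim}}$ and $Z_{\mathrm{sing}}$ for small $\beta$:
\begin{itemize}
\item By Lemma~\ref{lem:dimer-excess}, $Z_{\mathrm{dim}}\le p\beta^{1-\alpha}Z_{\mathrm{sing}}\le Z_{\mathrm{sing}}$.
\item $Z_{\mathrm{dim}}\ge 1$. Each of the infinitely many terms $e^{-\beta(m+n+f(m,n))}$ tends to $1$ as $\beta\to0$, so by Fatou $Z_{\mathrm{dim}}\to\infty$.
\end{itemize}

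\textbf{Part (i).} Let $S=\{x\in B:\omega_x=1\}$, with $k=|S|$, and consider the induced subgraph $G[S]$ of the hypercube graph on $B\cong\{0,1\}^d$. Choose a maximum matching of $G[S]$ of size $\nu$. Keep only the interaction terms on the matched edges and drop all others. This gives
\[
Z_B(\omega)\le Z_{\mathrm{dim}}^{\nu}\,Z_{\mathrm{sing}}^{\,k-2\nu}.
\]

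\emph{Case $\nu=0$.} Then $S$ is an independent set of the cube. The only independent sets of size $2^{d-1}$ are the two colour classes, and these are exactly $\omega^1_B,\omega^2_B$. Since $B$ is a defect block, $k\le 2^{d-1}-1$, so
\[
Z_B\le Z_{\mathrm{sing}}^{2^{d-1}-1}\le Z_{\mathrm{sing}}^{2^{d-1}-1}Z_{\mathrm{dim}},
\]
using $Z_{\mathrm{dim}}\ge1$ and $Z_{\mathrm{sing}}\ge1$.

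\emph{Case $\nu\ge1$.} Using $Z_{\mathrm{dim}}^{\nu-1}\le Z_{\mathrm{sing}}^{\nu-1}$, we get
\[
Z_B\le Z_{\mathrm{dim}}\,Z_{\mathrm{sing}}^{\,k-\nu-1}.
\]
The cube graph is bipartite, so König's theorem gives that the independence number of $G[S]$ equals $k-\nu$. Any independent set of the cube has size at most $2^{d-1}$, hence $k-\nu\le 2^{d-1}$. This yields \eqref{eq:per-block}.

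\textbf{Part (ii).} I read the hypothesis as $\omega_B=\omega^{\#}_B$ and $\omega_{B'}=\omega^{\#'}_{B'}$ with $\#\ne\#'$. Then $B\cup B'$ has exactly $2^{d-1}+2^{d-1}=2^d$ occupied sites.

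If $B,B'$ share a face, pick $x\in B$ on the shared face with $\omega^{\#}_x=1$, and let $y\in B'$ be its neighbour across the face. Since $y$ is adjacent to $x$ it has the opposite parity, and $\omega^{\#'}=1-\omega^{\#}$, so $\omega_y=\omega^{\#'}_y=1-\omega^{\#}_y=\omega^{\#}_x=1$. Thus $\{x,y\}$ is an occupied adjacent pair. Keeping only $f(n_x,n_y)$ gives
\[
Z_{B\cup B'}\le Z_{\mathrm{dim}}\,Z_{\mathrm{sing}}^{2^d-2},
\]
which is \eqref{eq:per-pair}.

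\textbf{Main obstacle.} The delicate step is the geometric input in (ii) when $B,B'$ are only diagonally adjacent in $d_\infty$. In that case no pair of sites in $B\cup B'$ need be nearest neighbours. I would first check whether this case is vacuous. If "correct" is meant in the sense of the contour definition, a correct block forces every $d_\infty$-neighbouring block to have its own type, so two correct blocks of opposite type cannot be $d_\infty$-adjacent at all. Otherwise I would restrict the statement to face-adjacent pairs, which is what the Peierls argument uses when it traces an interface between regions of different type.

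The only other non-routine ingredient is the combinatorial fact in (i): the maximum independent sets of $\{0,1\}^d$ are exactly the two parity classes, combined with the König step. To prove the first half, I would partition the cube into $2^{d-1}$ edges of a perfect matching, which gives the bound $2^{d-1}$. For the characterization of equality, I would use connectivity of the cube together with regularity: in a connected $d$-regular bipartite graph, an independent set of half the vertices must be a colour class.
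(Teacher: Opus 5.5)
Your proof is correct and follows the paper's argument: for (i) you keep only the interactions along a matching of the occupied sites, and for (ii) you keep a single occupied nearest-neighbour pair across the shared face. Your K\"onig step and your characterisation of the maximum independent sets of $\{0,1\}^d$ supply the combinatorial facts the paper merely asserts in its Cases 2 and 3. Your caveat on (ii) is also well founded: the paper's proof likewise tacitly uses a shared face, and for non-defect blocks of opposite type that are only diagonally $d_\infty$-adjacent there is no nearest-neighbour pair between them, so $Z_{B\cup B'}(\omega)=Z_{\mathrm{sing}}^{2^d}$ and the stated bound would fail.
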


\begin{proof}
\textbf{(i)} Set $N := |\{x \in B : \omega_x = 1\}|$ and let $M$ be a \emph{matching} in the  "occupied" subgraph of $B$, i.e.\  $\{x \in B : \omega_x = 1\}$. A matching is a set of edges which do not share any vertex.  Dropping all interaction terms, except those  corresponding to edges in $M$, yields
\begin{equation}\label{eq:Bi-skeleton}
    Z_{B}(\omega) \;\leq\; Z_{\mathrm{sing}}^{N}\, \left(\frac{Z_{\mathrm{dim}}}{Z_{\mathrm{sing}}^2}\right)^{|M|}.
\end{equation}
Assuming that a block $B$ is a defect block, we can separate $B$ into 3 cases:
\\
\begin{figure}[h]
    \centering
    \includegraphics[width=\textwidth]{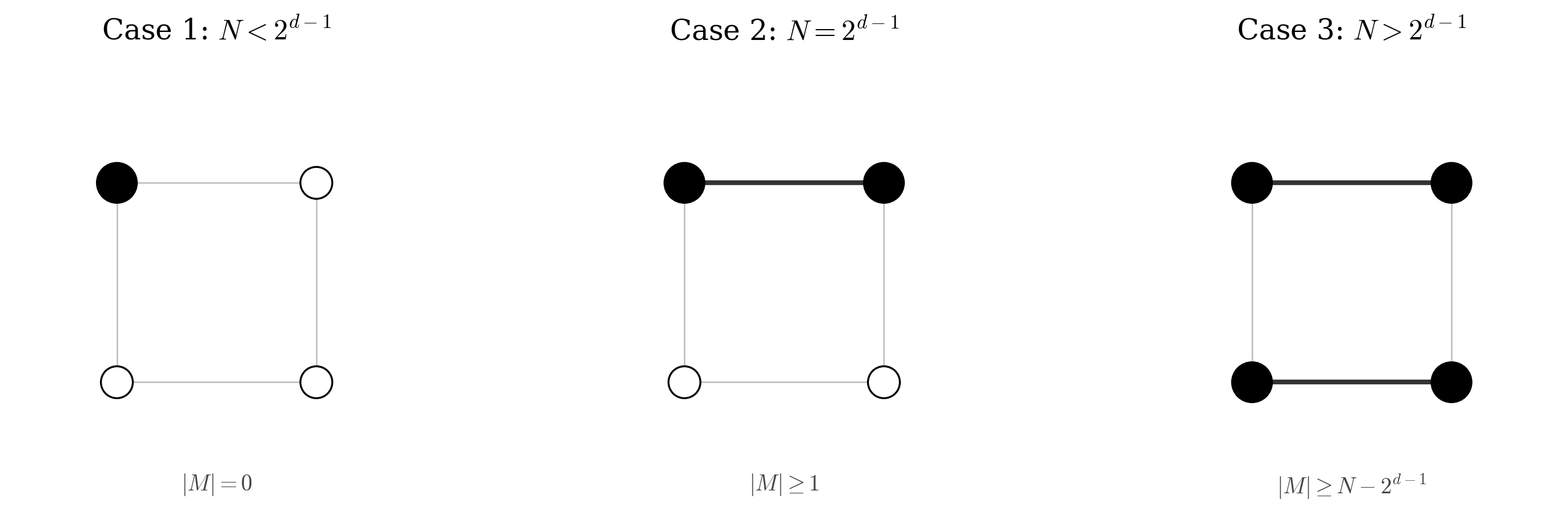}
    \caption{The figure depicts 3 cases of defect blocks for a $2\times 2$ block. The black vertices $x$ have $\omega_x=1$. The black edges are the elements of $M$. We get $|M|=1$ in case 2 and $|M|=2$ in the case 3 block.}
    \label{fig:lattice}
\end{figure}
\\
\emph{Case 1: $N < 2^{d-1}$.} Setting $|M| = 0$ in \eqref{eq:Bi-skeleton},
\begin{equation}
    \frac{Z_{B}(\omega)}{Z_{\mathrm{sing}}^{2^{d-1}}} \leq Z_{\mathrm{sing}}^{-1} \leq \frac{Z_{\mathrm{dim}}}{Z_{\mathrm{sing}}},
\end{equation}
the last bound using $Z_{\mathrm{dim}} \geq 1$ for $\beta$ small.

\emph{Case 2: $N = 2^{d-1}$.} As $B$ is a defect block, we can choose a non-empty matching $M$ s.t. $|M| \geq 1$, giving
\begin{equation}
\frac{Z_{B}(\omega)}{Z_{\mathrm{sing}}^{2^{d-1}}} \leq \frac{Z_{\mathrm{dim}}}{Z_{\mathrm{sing}}^2} \leq \frac{Z_{\mathrm{dim}}}{Z_{\mathrm{sing}}},
\end{equation}
using $Z_{\mathrm{sing}} \geq 1$ for $\beta$ small.

\emph{Case 3: $N > 2^{d-1}$.} We can find a matching $M$ with $|M| \geq N - 2^{d-1}$. Substituting in~\eqref{eq:Bi-skeleton},
\begin{equation}
    \frac{Z_{B}(\omega)}{Z_{\mathrm{sing}}^{2^{d-1}}} \;\leq\; \left(\frac{Z_{\mathrm{dim}}}{Z_{\mathrm{sing}}}\right)^{N-2^{d-1}} \;\leq\; \frac{Z_{\mathrm{dim}}}{Z_{\mathrm{sing}}},
\end{equation}
since $Z_{\mathrm{dim}}/Z_{\mathrm{sing}} \leq 1$ in the regime considered.

\textbf{(ii)}  Direct inspection of $\omega^1$ and $\omega^2$ on the shared face of $B$ and $B'$ shows at least one pair of adjacent occupied sites across the interface; retaining this edge\footnote{That is: retaining the interaction term corresponding to that edge} while dropping all other edges between $B$ and $B'$, yields the factor $Z_{\mathrm{dim}}$. Each block contributes $Z_{\mathrm{sing}}^{2^{d-1}-1}$ from its $2^{d-1}-1$ remaining correctly occupied sites. So,
\begin{equation}
    Z_{B \cup B'}(\omega) \;\leq\; (Z_{\mathrm{sing}}^{2^{d-1}-1})^2\, Z_{\mathrm{dim}}.
\end{equation}
\begin{figure}[h]
    \centering
    \includegraphics[width=0.6\textwidth]{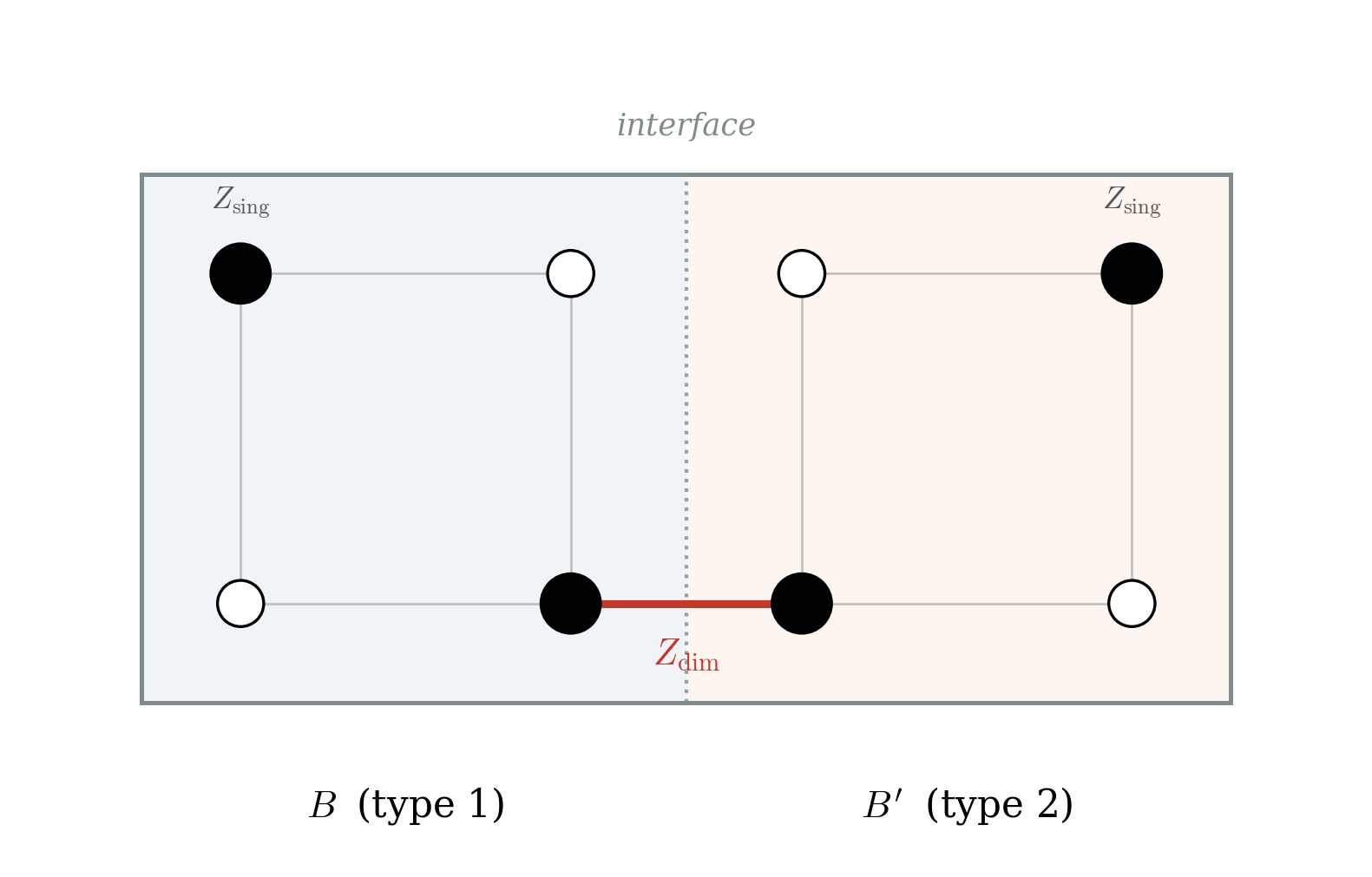}
    \caption{The two occupied sites contribute $Z_{\text{sing}}$ to the partition function and the two neighbouring sites connected by the red bond contribute $Z_{\text{dim}}$.} 
    \label{fig:lattice}
\end{figure}
\end{proof}

\begin{theorem}\label{thm:surface-energy}
For any contour $\gamma$, 
\begin{equation}
    Z_{\bar\gamma}(\gamma) \;\leq\; Z^0_{\bar\gamma}\, e^{-c(\beta)|\bar\gamma|},
\end{equation}
where $c(\beta) \to \infty$ as $\beta \to 0$.
\end{theorem}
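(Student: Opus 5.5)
The plan is to extract one factor $\frac{Z_{\mathrm{dim}}}{Z_{\mathrm{sing}}}\le p\beta^{1-\alpha}$ (Lemma~\ref{lem:dimer-excess}) for a positive fraction of the blocks of $\bar\gamma$. This will give
\[
Z_{\bar\gamma}(\gamma)\le Z^0_{\bar\gamma}\,(p\beta^{1-\alpha})^{|\bar\gamma|/K_d}
\]
for some dimensional constant $K_d$ (for instance $K_d=2\cdot 5^d$). We then take $c(\beta)=\frac{1-\alpha}{K_d}\log(1/\beta)-\frac{\log p}{K_d}$, which tends to $\infty$ as $\beta\to 0$. The starting point is the observation already used in Lemma~\ref{lem:small-piece}: since $f\ge 0$, dropping interaction terms only increases $Z$. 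So for any partition of $\bar\gamma$ into disjoint pieces $P_1,\dots,P_r$ (single blocks or pairs of adjacent blocks), we have $Z_{\bar\gamma}(\gamma)\le\prod_i Z_{P_i}(\omega)$. Each piece that is a correct block contributes at most $Z_{\mathrm{sing}}^{2^{d-1}}$, which is exactly its share of $Z^0_{\bar\gamma}=Z_{\mathrm{sing}}^{|\bar\gamma|2^{d-1}}$ (the exponent counts occupied sites). Each defect block, or each adjacent pair of correct blocks of opposite types, contributes its reference share times $Z_{\mathrm{dim}}/Z_{\mathrm{sing}}$ by Lemma~\ref{lem:small-piece}(i),(ii), using $Z_{\mathrm{dim}}/Z_{\mathrm{sing}}^2\le Z_{\mathrm{dim}}/Z_{\mathrm{sing}}$ since $Z_{\mathrm{sing}}\ge1$.

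The combinatorial step is to show that $\bar\gamma$ contains at least $|\bar\gamma|/K_d$ disjoint such "bad pieces". Every block of $\bar\gamma$ lies within $d_\infty$-distance $1$ of an incorrect block $B\in I(\omega)$. An incorrect block $B$ is one whose $3^d$-neighbourhood does not coincide with a single $\omega^\#$. So either some block in that neighbourhood is a defect block, or all blocks there are correct (as single blocks) but not all of the same type. In the latter case, two $d_\infty$-adjacent blocks of different types occur. If they are only diagonally adjacent, I will pass through intermediate blocks along a coordinate path inside the neighbourhood to find two $\ell^1$-adjacent blocks (sharing a face) of opposite type. Lemma~\ref{lem:small-piece}(ii) seems to need a shared face, so this reduction matters.

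Hence every incorrect block has, within distance $2$, a bad piece contained in $\bar\gamma$. For this I need the whole $5^d$-ball around $B$ to lie in $\mathcal I(\omega)$, so within distance $1$ of $I(\omega)$. That holds for the $3^d$-ball, so I should restrict to the $3^d$-neighbourhood, where the defect or opposite-type pair was found anyway. Consequently each block of $\bar\gamma$ is within distance $2$ of a bad piece in $\bar\gamma$. A greedy selection then yields disjoint bad pieces: repeatedly choose a bad piece, and discard all pieces meeting its $d_\infty$-neighbourhood of radius $3$. Each choice covers at most $2\cdot 7^d$ blocks' worth of "responsibility", so we get at least $|\bar\gamma|/(2\cdot 7^d)$ disjoint bad pieces. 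All remaining blocks of $\bar\gamma$ are treated as singletons, which are bounded above by $Z_{\mathrm{sing}}^{2^{d-1}}$ in all cases by Lemma~\ref{lem:small-piece}(i) and the correct-block bound.

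The main obstacle I expect is bookkeeping at the boundary of $\bar\gamma$ and in the pair case. Lemma~\ref{lem:small-piece}(ii) must be applied with both blocks genuinely inside $\bar\gamma$ and sharing a face. One also has to check that a correct-block singleton indeed gives $Z_B(\omega)=Z_{\mathrm{sing}}^{2^{d-1}}$ exactly: a "correct" block with respect to $\omega_B$ alone matches some $\omega^\#_B$, and the internal edges between its occupied sites are absent in a checkerboard. I would first verify the adjacency reduction (diagonal to face-adjacent opposite types) carefully, since all blocks in the neighbourhood are non-defect and hence each equals $\omega^1_B$ or $\omega^2_B$.
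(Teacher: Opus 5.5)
Your proposal is correct and follows the paper's argument. You bound $Z_{\bar\gamma}(\gamma)$ by a product over a partition of $\bar\gamma$ into disjoint bad pieces (defect blocks and opposite-type pairs, controlled by Lemma~\ref{lem:small-piece}) and leftover singletons. You then use that every block of $\bar\gamma$ lies within $d_\infty$-distance $2$ of a bad piece inside $\bar\gamma$; the paper selects the pieces as all of $D$ together with a maximal matching $\caP$ rather than greedily.

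Only disjointness of the pieces is needed, so your exclusion radius $3$ is unnecessary. With that radius the covering constant would in fact be $2\cdot 11^d$ rather than $2\cdot 7^d$, which is harmless, and dropping the exclusion altogether recovers the paper's $2\cdot 5^d$. Your reduction to face-adjacent opposite-type pairs inside the $3^d$-neighbourhood is a point the paper leaves implicit, but Lemma~\ref{lem:small-piece}(ii) genuinely requires it.
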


\begin{proof}
For any partition $A = A_1 \sqcup ...\sqcup A_k$:
\begin{equation}\label{eq:bipartition}
    Z_A(\omega) \leq \prod_{i=1}^k Z_{A_i}(\omega)
\end{equation}
This follows by simply dropping interaction terms in the Hamiltonian that act between different $A_i$.

Next, let $D$ be the set of defect blocks in $\bar\gamma$. We let $\caP$ be a maximal (w.r.t.\ inclusion) collection of pairs $\{B,B'\}$ of adjacent correct blocks of opposite type, such that $B,B' \in \bar\gamma \setminus D$, and such that no block features twice in $\caP$.

We now consider a partition of $\bar\gamma$ into defect blocks in $D$, pairs in $\mathcal{P}$, and blocks in $\bar\gamma \setminus (D \cup V(\mathcal{P}))$ with $V(\mathcal{P})$ the set of blocks featuring in $\caP$.
Applying  \eqref{eq:bipartition} we get
\begin{equation}
    Z_{\bar\gamma}(\omega) \;\leq\; \prod_{B \in D} Z_B(\omega) \cdot \prod_{\{B,B'\} \in \mathcal{P}} Z_{B \cup B'}(\omega) \cdot \prod_{B \in \bar\gamma \setminus (D \cup V(\mathcal{P}))}  Z_B(\omega).
\end{equation}
Hence
\begin{equation}\label{eq:bound-product}
    \frac{Z_{\bar\gamma}(\omega)}{Z^0_{\bar\gamma}}
    \;\leq\; \left(\frac{Z_{\mathrm{dim}}}{Z_{\mathrm{sing}}}\right)^{|D|}
    \left(\frac{Z_{\mathrm{dim}}}{Z_{\mathrm{sing}}^2}\right)^{|\mathcal{P}|}
    \;\leq\; \left(\frac{Z_{\mathrm{dim}}}{Z_{\mathrm{sing}}}\right)^{|D| + |\mathcal{P}|},
\end{equation}
the last step using $Z_{\mathrm{sing}} \geq 1$ for $\beta$ small.

Every incorrect block $B\in I(\omega)\cap \bar\gamma$ satisfies $d_\infty(B,D \cup V(\mathcal{P}))\leq 1$. Since every  block $B$ in $\bar\gamma$ satisfies $d_\infty(B,I(\omega))\leq 1 $ by definition of $\mathcal{I}(\omega)$, it follows that every block $B$ in $\bar\gamma$ satisfies  $  d_\infty(B,D \cup V(\mathcal{P}))\leq 2$, and hence
\begin{equation}\label{eq:counting}
    |\bar\gamma| \;\leq\; 5^d (|D| + 2|\mathcal{P}|) \;\leq\; 2 \cdot 5^d \, (|D| + |\mathcal{P}|).
\end{equation}

Combining \eqref{eq:Zsingasymp}, \eqref{eq:bound-product}, \eqref{eq:counting}  and Lemma~\ref{lem:dimer-excess}, we obtain the claim with $c(\beta) := \frac{(1-\alpha)|\log\beta|-\log p}{2 \cdot 5^d}$. For $\beta$ small, $|\log\beta|\gg\log p$ and we get $Z_{\bar\gamma}(\gamma) \;\leq\; Z^0_{\bar\gamma}\, e^{-c\log\beta|\bar\gamma|}$ for some constant $c$.
\end{proof}

\section{Cluster expansion of the bulk partition function}
\label{sec:cluster-expansion}

\subsection{Bulk--boundary decomposition}
\label{subsec:bulk-boundary}
Let $\mathcal{C}^\#$ be the set of all contours of type $\#$ and $\mathrm{B}(k)=\{B\in \mathbb{L}:d_\infty(B,0)\leq k\}$. Earlier we had used $\Lambda\subset \Z^d$ to denote a block region. We will use the same symbol to also denote $\Lambda\subset \mathbb{L}$ with $\Lambda$ being a set of blocks now and $|\Lambda|$ counting the number of blocks.
\begin{theorem}[Bulk--boundary decomposition]
\label{thm:bulk-boundary}
Assume that $\forall \gamma\in \mathcal{C}^\#$, $w(\gamma)\leq e^{-\tau|\bar\gamma|}$ for $\tau$ sufficiently large.
Then the bulk free energy density
\begin{equation}
    g^\# \;:=\; \lim_{k\to\infty} \frac{1}{|\mathrm{B}(k)|}\,\log\Xi^\#(\mathrm{B}(k))
    \label{eq:g-definition}
\end{equation}
exists, and for every finite $\Lambda\subset\mathbb{L}$,
\begin{equation}\label{eq:bulk-boundary}
    \left|\log\Xi^\#(\Lambda) - g^\#|\Lambda|\right|\;\leq\; |\partial^{\mathrm{in}}\Lambda|.
\end{equation}
\end{theorem}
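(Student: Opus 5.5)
The plan is to prove the statement by a standard convergent cluster expansion of the polymer representation \eqref{eq:polymer}. Fix $\#$ and regard the contours in $\mathcal{C}^\#$ as polymers, with incompatibility $\gamma\not\sim\gamma'$ iff $d_\infty(\bar\gamma,\bar\gamma')\le 1$, and with weights $w(\gamma)=w^\#(\gamma)\le e^{-\tau|\bar\gamma|}$.

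\textbf{Step 1: Kotecký--Preiss criterion.} I will verify the Kotecký--Preiss condition
\[
\sum_{\gamma'\not\sim\gamma} w(\gamma')\,e^{a|\bar\gamma'|}\le a|\bar\gamma|
\]
with, say, $a=1$. This rests on the counting bound that the number of contours with $|\bar\gamma'|=n$ whose support contains a given block is at most $C_d^{\,n}$. Indeed, connected sets of $n$ blocks containing a fixed block number at most $(e\,3^d)^n$, and the configurations $\omega_{\bar\gamma'}$ number at most $2^{2^d n}$. Incompatibility with $\gamma$ forces $\bar\gamma'$ to meet the $1$-neighbourhood of $\bar\gamma$, which has at most $3^d|\bar\gamma|$ blocks. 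Hence the left-hand side is at most
\[
3^d|\bar\gamma|\sum_n C_d^{\,n}e^{(1-\tau)n},
\]
and this is at most $|\bar\gamma|$ once $\tau$ is large. The Kotecký--Preiss theorem then gives
\[
\log\Xi^\#(\Lambda)=\sum_{X\subset\Lambda}\Phi(X),
\]
a sum over clusters (connected multisets of polymers with supports in $\Lambda$, with Ursell weights $\Phi$). It also gives the tail bound
\[
\sum_{X\ni B}|\Phi(X)|\,e^{\|X\|}\le \delta(\tau)\to0,
\]
where the sum is over clusters whose support $\bar X$ touches a fixed block $B$, $\|X\|$ is the total size, and we can additionally weight by $e^{(\tau/2)\|X\|}$ by splitting off half of $\tau$.

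\textbf{Step 2: Bulk free energy.} Using translation invariance of the weights on $\mathbb{L}$ (the polymer model is built from translates of contours in $\mathcal{C}^\#$), I will define
\[
g^\#:=\sum_{X:\,B_0\in\bar X}\frac{\Phi(X)}{|\bar X|},
\]
where $B_0$ is a fixed block. This series is absolutely convergent by Step 1. The identity
\[
\sum_{X\subset\Lambda}\Phi(X)-g^\#|\Lambda| \;=\; -\sum_{B\in\Lambda}\ \sum_{\substack{X\ni B\\ \bar X\not\subset\Lambda}}\frac{\Phi(X)}{|\bar X|}
\]
then reduces the problem to bounding clusters that stick out of $\Lambda$.

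\textbf{Step 3: Boundary estimate.} A cluster containing $B\in\Lambda$ and not contained in $\Lambda$ is connected, so it must pass through $\partial^{\mathrm{in}}\Lambda$ (or within distance one of it). Reassigning each such cluster to a boundary block it meets, and absorbing the factor $|\bar X|$ (the number of choices of $B$) into $1/|\bar X|$, the error is bounded by
\[
3^d|\partial^{\mathrm{in}}\Lambda|\sup_{B'}\sum_{X\ni B'}|\Phi(X)|\le 3^d\delta(\tau)\,|\partial^{\mathrm{in}}\Lambda|\le |\partial^{\mathrm{in}}\Lambda|
\]
for $\tau$ large. Applying this bound to $\Lambda=\mathrm{B}(k)$, where $|\partial^{\mathrm{in}}\mathrm{B}(k)|/|\mathrm{B}(k)|\to0$, shows that the limit \eqref{eq:g-definition} exists and equals the $g^\#$ defined above.

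\textbf{Main obstacle.} The subtle point is that $w^\#(\gamma)$ is defined through ratios $\Xi^{\#(A_j)}_{A_j}/\Xi^\#_{A_j}$ that depend on the interiors. So the weights must be translation covariant and depend only on $\gamma$, not on $\Lambda$, for the cluster expansion and the definition of $g^\#$ to be legitimate. I would check this directly: the weight depends only on $\bar\gamma$, $\omega_{\bar\gamma}$ and its own interior, hence it is intrinsic to $\gamma$. Translation covariance holds for block translations because they preserve $\omega^\#$. Apart from this, the contour counting constant $C_d$ is the only quantitative input, and it dictates how large $\tau$ must be.
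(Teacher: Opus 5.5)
Your proposal is correct and is essentially the argument the paper has in mind. The paper gives no details beyond invoking the Koteck\'y--Preiss criterion and citing Friedli--Velenik. What you write out is exactly the standard proof found there: the KP check via contour counting, $g^\#$ defined as the per-block cluster sum $\sum_{X\ni B_0}\Phi(X)/|\bar X|$, and the bound $\delta(\tau)\,|\partial^{\mathrm{in}}\Lambda|$ on clusters that stick out of $\Lambda$. Your explicit check that $w^\#(\gamma)$ is intrinsic to $\gamma$ and invariant under block translations is something the paper uses tacitly, so it is a useful addition rather than a deviation.
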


\begin{proof}
The polymer partition function $\Xi_\Lambda$ fits into the abstract framework
of~\cite{FriedliVelenik}. The key point is that the exponential weight bound
on contours implies, via the Koteck\'y--Preiss criterion, that the contribution
of contours crossing the boundary $\partial^{\mathrm{in}}\Lambda$ is exponentially
suppressed, which separates the bulk from the boundary. We refer
to~\cite[Thm.~7.28, Exercise~7.13]{FriedliVelenik} for the complete proof.
\end{proof}

\subsection{$\mathbb{Z}_2$ symmetry}
\label{subsec:Z2-symmetry}

\begin{lemma}
\label{lem:Z2}
We have $g^1 = g^2$.
\end{lemma}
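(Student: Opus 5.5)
The plan is to exploit the lattice symmetry that exchanges the two reference configurations and check that it transports the whole polymer representation of type $1$ onto that of type $2$. The translation $x\mapsto x+e_1$ is the natural candidate, but it does not map block regions to block regions: $B_i+e_1$ is not a block. I will therefore use a symmetry that preserves the block structure instead. The first choice is the reflection $R:x\mapsto (-1-x_1,x_2,\dots,x_d)$. It sends the block $2i+\{0,1\}^d$ to $2(-i_1-1,i_2,\dots,i_d)+\{0,1\}^d$, which is again a block. It changes the parity of $x_1+\dots+x_d$ because $-1-x_1\equiv x_1+1 \pmod 2$, so it swaps $\mathcal E^1$ and $\mathcal E^2$, and hence $\omega^1\circ R=\omega^2$. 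It is also a graph automorphism of $\mathbb Z^d$, so by (F1) it preserves $H_\Lambda$.

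The key steps are as follows.

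\begin{enumerate}
\item For a block region $\Lambda$, the map $n\mapsto n\circ R$ is a bijection from the configurations on $\Lambda$ to those on $R\Lambda$. It preserves energies, and it maps $\Lambda^{\mathrm{col}}$ onto $(R\Lambda)^{\mathrm{col}}$ because $R$ is a $d_\infty$-isometry. It also maps the constraint $\omega=\omega^1$ on the collar to the constraint $\omega=\omega^2$ on the collar. Hence $Z^1_\Lambda=Z^2_{R\Lambda}$, and since $Z^0_\Lambda$ depends only on $|\Lambda|$, we get $\Xi^1_\Lambda=\Xi^2_{R\Lambda}$.
\item On the block lattice, $R$ induces an isometry of $\mathbb L$, and $|R\Lambda|=|\Lambda|$.
\item We then compare the free energy densities. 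The limit defining $g^2$ can be taken along $R\,\mathrm{B}(k)$ instead of $\mathrm{B}(k)$. To justify this, I use the bulk--boundary estimate \eqref{eq:bulk-boundary} of Theorem~\ref{thm:bulk-boundary}, which holds for every finite $\Lambda$. It gives $|\log\Xi^2(R\mathrm{B}(k))-g^2|\mathrm{B}(k)||\le|\partial^{\mathrm{in}}\mathrm{B}(k)|=o(|\mathrm{B}(k)|)$. The hypothesis of that theorem, $w(\gamma)\le e^{-\tau|\bar\gamma|}$, is assumed for both types.
\item Combining the two, we obtain $g^1=\lim_k|\mathrm{B}(k)|^{-1}\log\Xi^1(\mathrm{B}(k))=\lim_k|\mathrm{B}(k)|^{-1}\log\Xi^2(R\mathrm{B}(k))=g^2$.
\end{enumerate}

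The main obstacle is a bookkeeping subtlety. The weights $w^\#$ contain ratios $\Xi^{\#(A_j)}/\Xi^\#$ of interior partition functions, so the statement about weights must be derived from the statement about partition functions, not the other way round. Working at the level of $Z^\#_\Lambda$, as in the first step, avoids this difficulty. The relation $\Xi^1_\Lambda=\Xi^2_{R\Lambda}$ also directly yields $w^1(\gamma)=w^2(R\gamma)$ by induction on the nesting depth, if that is needed. A remaining point is to check that the collar and the block-region property are preserved under $R$. This is immediate because $R$ maps blocks to blocks and preserves $d_\infty$.
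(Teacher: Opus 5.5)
Your proof is correct, and it takes a different route from the paper's in two respects.

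\textbf{The choice of symmetry.} The paper uses the translation $x\mapsto x+e_1$, which does not map blocks to blocks, and then ``redefines the blocks'' for $T\Lambda$. After this re-blocking, the type-$2$ model on $T\Lambda$ is just the type-$1$ model transported by $T$. So identifying its free energy with $g^2$, which is defined with respect to the original blocks, is essentially what the lemma asserts, and the paper leaves this unaddressed. Your reflection $R$ maps blocks to blocks, is a $d_\infty$-isometry of both $\mathbb{Z}^d$ and $\mathbb{L}$, and swaps $\omega^1$ and $\omega^2$. Collars, contours, types and interiors are therefore transported within a single contour formalism, which closes that gap. You can also drop step 3: composing $R$ with the block translation by $2e_1$ gives the reflection $x_1\mapsto 1-x_1$, and $\Xi^2$ is invariant under that translation. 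This reflection maps each $\mathrm{B}(k)$ onto itself, so $\Xi^1(\mathrm{B}(k))=\Xi^2(\mathrm{B}(k))$ holds exactly.

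\textbf{The level at which the identity is proved.} You work with the physical ratios $Z^\#_\Lambda/Z^0_\Lambda$ and assume the weight bound for all contours. The paper instead works with truncated polymer models $\Xi^\#_n$, in which contours with $|\mathrm{int}\,\gamma|>n$ get weight $0$. The truncation is needed because the lemma is applied inside the induction of Theorem~\ref{thm:weight-bound}. There the weight bound is only known for contours with $|\mathrm{int}\,\gamma|\le n$, so using your untruncated version would be circular.

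For the lemma as literally stated, with $g^\#$ defined under the hypothesis of Theorem~\ref{thm:bulk-boundary}, your proof is complete. For the use the paper makes of it, the step you flag as ``if that is needed'' really is needed. Since $\Xi^\#_n$ is not a ratio of physical partition functions, your step 1 does not apply to it directly. What does work is the identity $w^1(\gamma)=w^2(R\gamma)$, together with the $R$-invariance of the condition $|\mathrm{int}\,\gamma|\le n$. These give $\Xi^1_n(\Lambda)=\Xi^2_n(R\Lambda)$, and your steps 3--4 then go through verbatim for the truncated free energies.

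That weight identity does not require induction. The $\Xi$'s appearing in $w^\#$ are physical ratios, so it follows from step 1 applied to each $A_j$ together with $Z_{\bar\gamma}(\gamma)=Z_{R\bar\gamma}(R\gamma)$.
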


\begin{proof}
Consider the lattice translation $T:\mathbb{Z}^d\to\mathbb{Z}^d$, $x\mapsto x+e_1$,
where $e_1$ is the first standard basis vector. Since $x_1+\cdots+x_d$ and
$(x_1+1)+x_2+\cdots+x_d$ have opposite parities, $T$ exchanges the sublattices
$\mathcal{E}^1\leftrightarrow \mathcal{E}^2$ and therefore maps $\omega^1\leftrightarrow\omega^2$.
Fix $n \geq 0$, and set the weight of all contours with $|\text{int}\,\gamma|> n$ to 0. We denote the restricted polymer model on this set of weights by $\Xi^\#_n(\Lambda)$.
From Theorem~\ref{thm:bulk-boundary}, we have 
\begin{equation}\label{eq:gsym1}
    \left|\log\Xi_n^1(\Lambda) - g^1|\Lambda|\right|\;\leq\; |\partial^{\mathrm{in}}\Lambda|
\end{equation}
and 
\begin{equation}\label{eq:gsym2}
    \left|\log\Xi_n^2(T\Lambda) - g^2|\Lambda|\right|\;\leq\; |\partial^{\mathrm{in}}\Lambda|.
\end{equation}

Note that
$$\Xi_n^1(\Lambda)=\Xi_n^2(T\Lambda).$$
$T\Lambda$ is not a block region anymore because we shift by one site but we can redefine our blocks appropriately for $T\Lambda$ and make it a block region so that $\Xi^2_n(T\Lambda)$ is well defined.
Subtracting eqs. \eqref{eq:gsym1} and \eqref{eq:gsym2}, we get
\begin{equation}
    \left|g^1|\Lambda|-g^2|\Lambda|\right|\;\leq\; 2|\partial^{\mathrm{in}}\Lambda|
\end{equation}
As $g^\#$ is independent of $\Lambda$, we can pick $\Lambda$ large enough to get $g^1=g^2$.
\end{proof}

We write $g$ for the common value henceforth.

\subsection{Inductive bound on contour weights}
\label{subsec:weight-bound}

\begin{theorem}
\label{thm:weight-bound}
There exists $\beta_0>0$ such that for all $\beta\in(0,\beta_0)$,
\begin{equation}
    w^\#(\gamma) \;\leq\; e^{-\tau(\beta)|\bar\gamma|}
    \label{eq:weight-bound}
\end{equation}
holds for every contour $\gamma$, where $\tau(\beta):=c|\log\beta|-2\to\infty$
as $\beta\to 0$, with $c>0$ the constant from Theorem~\ref{thm:surface-energy}.
\end{theorem}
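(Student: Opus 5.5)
The bound will be proved by induction on the size of the interior, $|\mathrm{int}\,\gamma|$, in the standard Pirogov--Sinai manner for models with a symmetry between the phases, as in \cite[Ch.~7]{FriedliVelenik}. Recall that
\[
w^\#(\gamma)=\frac{Z_{\bar\gamma}(\gamma)}{Z^0_{\bar\gamma}}\prod_{j=1}^k\frac{\Xi^{\#(A_j)}_{A_j}}{\Xi^\#_{A_j}} .
\]
Theorem~\ref{thm:surface-energy} already bounds the first factor by $e^{-c|\log\beta|\,|\bar\gamma|}$. The task is therefore to show that the product of ratios of interior partition functions costs at most $e^{2|\bar\gamma|}$. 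That gives $\tau(\beta)=c|\log\beta|-2$.

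\emph{Base case.} Suppose $|\mathrm{int}\,\gamma|=0$, or more generally that every component $A_j$ has the same type as $\gamma$. Then every ratio equals $1$, and the claim follows directly from Theorem~\ref{thm:surface-energy}.

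\emph{Induction hypothesis.} Fix $n$ and assume \eqref{eq:weight-bound} for all contours of both types with $|\mathrm{int}\,\gamma|<n$. Take $\gamma$ with $|\mathrm{int}\,\gamma|=n$. Every contour that can appear inside some $A_j$ has interior strictly contained in $A_j\subset\mathrm{int}\,\gamma$, so its interior has size $<n$. Hence the polymer models $\Xi^1_{A_j}$ and $\Xi^2_{A_j}$ only involve contours obeying the bound.

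\emph{Comparison step.} To use Theorem~\ref{thm:bulk-boundary} legitimately, I would work with the truncated models $\Xi^\#_{n}$ of the proof of Lemma~\ref{lem:Z2}, in which weights of contours with interior of size $\ge n$ are set to zero. These truncated models satisfy the weight hypothesis for both types, and on $A_j$ they coincide with the true $\Xi^\#_{A_j}$. Theorem~\ref{thm:bulk-boundary} gives
\[
\bigl|\log\Xi^{\#'}_{A_j}-g^{\#'}_n|A_j|\bigr|\le|\partial^{\mathrm{in}}A_j| \quad\text{for both } \#'.
\]
By the $\mathbb{Z}_2$ argument of Lemma~\ref{lem:Z2}, applied to the truncated models, $g^1_n=g^2_n$. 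Subtracting the two estimates, the volume terms cancel and
\[
\left|\log\frac{\Xi^{\#(A_j)}_{A_j}}{\Xi^\#_{A_j}}\right|\le 2|\partial^{\mathrm{in}}A_j| .
\]

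\emph{Boundary count.} The inner boundary of $A_j$ is adjacent to $\partial^{\mathrm{ex}}A_j\subset\bar\gamma$. I would arrange, by choosing the boundary sets compatibly or by absorbing a $3^d$ factor into $c$, that $\sum_j|\partial^{\mathrm{in}}A_j|\le|\bar\gamma|$. The $A_j$ are disjoint, and each block of $\bar\gamma$ is $d_\infty$-adjacent to boundary blocks of only boundedly many components. This yields
\[
w^\#(\gamma)\le e^{-c|\log\beta||\bar\gamma|+2|\bar\gamma|},
\]
which closes the induction.

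\emph{Main obstacle.} The delicate step is the comparison step: making sure that Theorem~\ref{thm:bulk-boundary} and Lemma~\ref{lem:Z2} are applied only to models whose contours already satisfy the weight bound, so that the argument is not circular. I also need the constant in \eqref{eq:bulk-boundary} to be uniform in the truncation level $n$. I would handle this by noting that the Koteck\'y--Preiss estimates depend only on $\tau$, and by checking that $T$ maps truncated contour families bijectively, so that $\Xi^1_n(\Lambda)=\Xi^2_n(T\Lambda)$ holds at each level. A secondary issue is matching $|\partial^{\mathrm{in}}A_j|$ against $|\bar\gamma|$ with constant $1$. If needed, I would weaken $\tau$ to $c|\log\beta|-C_d$, which still tends to infinity.
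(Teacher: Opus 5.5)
Your proposal is essentially the paper's proof: induction on $|\mathrm{int}\,\gamma|$, the base case from Theorem~\ref{thm:surface-energy}, and in the inductive step the bulk--boundary estimate of Theorem~\ref{thm:bulk-boundary} applied to the truncated models $\Xi_n^{\#}$, with the $\mathbb{Z}_2$ symmetry making the volume terms cancel and leaving a cost of $e^{2|\bar\gamma|}$. Your version is slightly more careful than the paper's in two places: you work with the truncated free energies $g^\#_n$ instead of the untruncated $g$, and you hedge on the boundary count, which the paper simply asserts as $|\partial^{\mathrm{in}}\mathrm{int}_{\#'}\gamma|\le|\bar\gamma|$ (falling back to $\tau=c|\log\beta|-C_d$ is the safe way to close that step).
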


\begin{proof}
We induct on the size of $\text{int}\gamma$.

\emph{Base case.} If $|\mathrm{int}\,\gamma|=0$, then $w^\#(\gamma)=\frac{Z_{\bar\gamma}(\gamma)}{Z^0_{\bar\gamma}}$
and the Peierls bound (Theorem~\ref{thm:surface-energy}) gives immediately
\begin{equation}
    w^\#(\gamma) \;\leq\; e^{-c|\log\beta||\bar\gamma|} \;=\; e^{-\tau(\beta)|\bar\gamma|-2|\bar\gamma|}
    \;\leq\; e^{-\tau(\beta)|\bar\gamma|}.
\end{equation}

\emph{Inductive step.} Suppose $w(\gamma')\leq e^{-\tau|\bar \gamma'|}$ for all contours with $|\text{int}\gamma'|\leq n$, and let $|\text{int}\gamma|=n+1$. Since every contour $\gamma'$ contributing to $\Xi^\#_n(\mathrm{int}_{\#'}\gamma)$ and $\Xi^{\#'}_n(\mathrm{int}_{\#'}\gamma)$ has $|\text{int}\gamma'|\leq n$, the induction hypothesis gives $w^\#(\gamma') \leq e^{-\tau|\bar\gamma'|}$ for every such $\gamma'$. So, we get
\begin{align}
\log \Xi^{\#'}_n(\mathrm{int}_{\#'}\gamma) &\leq g |\mathrm{int}_{\#'}\gamma| + |\partial^{\mathrm{in}}\mathrm{int}_{\#'}\gamma|,  \\
\log \Xi^\#_n(\mathrm{int}_{\#'}\gamma) &\geq g |\mathrm{int}_{\#'}\gamma| - |\partial^{\mathrm{in}}\mathrm{int}_{\#'}\gamma|, 
\end{align}
Subtracting,
\begin{equation}
\left|\log \frac{\Xi^{\#'}_n(\mathrm{int}_{\#'}\gamma)}{\Xi^\#_n(\mathrm{int}_{\#'}\gamma)}\right| \leq 2|\partial^{\mathrm{in}}\mathrm{int}_{\#'}\gamma| \leq 2|\bar\gamma|.
\end{equation}
Combining with the Peierls bound,
\begin{equation}
    \log w^\#(\gamma)
    \;\leq\; \log \frac{Z_{\bar\gamma}(\gamma)}{Z^0_{\bar\gamma}} + 2|\bar\gamma|
    \;\leq\; \bigl(-c|\log\beta| + 2\bigr)|\bar\gamma|.
\end{equation}
Choosing $\beta_0$ small enough that $c|\log\beta|-2>\tau_0$ for $\beta<\beta_0$
completes the induction, with $\tau(\beta)=c|\log\beta|-2$.
\end{proof}
We are now ready to give the 

\begin{proof}[Proof of Theorem \ref{thm:main}]

We fix
$\beta \in (0,\beta_0)$, so that the contour weight bound
\begin{equation}
w^\#(\gamma) \;\leq\; e^{-\tau(\beta)|\bar\gamma|}, \qquad
\tau(\beta) := c|\log\beta| - 2 \xrightarrow{\beta \to 0} \infty,
\label{eq:wbound-proof}
\end{equation}
holds for every contour $\gamma$ of type $\#$.
Fix $x \in \Lambda$ and let $B \in \mathbb{L}$ be the block containing $x$. Since
$\omega_x,\omega^\#_x \in \{0,1\}$,
\begin{equation}
\bigl|\E^\#_{\Lambda,\beta}(\omega_x) - \omega^\#_x\bigr|
\;\leq\; \mu^\#_{\Lambda,\beta}(\omega_x \neq \omega^\#_x).
\label{eq:reduction}
\end{equation}

If $\omega_x \neq \omega^\#_x$, then $B$ is either incorrect, in which case
$B \in \bar\gamma$ for some $\gamma \in \Gamma(\omega)$, or correct of type
$\#' \neq \#$. In the latter case, by Lemma~\ref{lem:correct-type} and the boundary condition on
$\Lambda^{\rm col}$, $B$ lies in a bounded maximal $d_\infty$-connected component of
$\mathbb{L} \setminus \caI(\omega)$ of type $\#'$, hence
$B \in \mathrm{int}_{\#'}\gamma$ for some $\gamma \in \Gamma(\omega)$.
Either way, $B \in \bar\gamma \cup \mathrm{int}_{\#'}\gamma$ for some
$\gamma \in \Gamma(\omega)$ of type $\#$. Using the bound $\mu^\#_{\Lambda,\beta}(\gamma \in \Gamma) \leq w^\#(\gamma)$,
\begin{equation}
\mu^\#_{\Lambda,\beta}(\omega_x \neq \omega^\#_x)
\;\leq\;
\sum_{\substack{\gamma \in \mathcal{C}^\# \\ B \in \bar\gamma\,\cup\,\mathrm{int}\,\gamma}}
w^\#(\gamma).
\label{eq:unionbound}
\end{equation}

If $B \in \bar\gamma \cup \mathrm{int}\,\gamma$ and $|\bar\gamma| = k$, then since
$\bar\gamma$ is $d_\infty$-connected of diameter at most $k$, there exists
$B' \in \bar\gamma$ with $d_\infty(B,B') \leq k$. Using the standard lattice bound
$\#\{\gamma \in \mathcal{C}^\# : B' \in \bar\gamma,\,|\bar\gamma| = k\} \leq e^{c_d k}$,
\[
\#\bigl\{\gamma \in \mathcal{C}^\# : B \in \bar\gamma \cup \mathrm{int}\,\gamma,\ |\bar\gamma| = k\bigr\}
\;\leq\; (2k+1)^d\, e^{c_d k} \;\leq\; e^{c'_d k},
\]
for some $c'_d > c_d$ depending only on $d$. Combining with
\eqref{eq:wbound-proof} and \eqref{eq:unionbound},
\begin{equation}
\mu^\#_{\Lambda,\beta}(\omega_x \neq \omega^\#_x)
\;\leq\; \sum_{k \geq 1} e^{c'_d k}\, e^{-\tau(\beta) k}
\;=\; \frac{e^{-(\tau(\beta) - c'_d)}}{1 - e^{-(\tau(\beta) - c'_d)}}
\;=:\; \epsilon(\beta),
\end{equation}
after shrinking $\beta_0$ to ensure $\tau(\beta) > c'_d$. Then
$\epsilon(\beta) \to 0$ as $\beta \to 0$, uniformly in $\Lambda$ and $x$,
establishing \eqref{eq:main}.

For the second claim, fix $\beta$ small enough that $\epsilon(\beta) < 1/2$, and
let $\mu^\#_\infty$ be any infinite-volume limit point of
$(\mu^\#_{\Lambda_n,\beta})$ along a sequence $\Lambda_n \uparrow \mathbb{Z}^d$ of
block regions. Passing \eqref{eq:main} to the limit gives
$|\E^\#_\infty(\omega_x) - \omega^\#_x| \leq \epsilon(\beta)$ for every $x$.
Since $\omega^1_x \neq \omega^2_x$,
\[
\bigl|\E^1_\infty(\omega_x) - \E^2_\infty(\omega_x)\bigr|
\;\geq\; 1 - 2\epsilon(\beta) \;>\; 0,
\]
so every limit point of $(\mu^1_{\Lambda_n,\beta})$ differs from every limit point
of $(\mu^2_{\Lambda_n,\beta})$.
\end{proof}
\noindent\textbf{Acknowledgements} The author thanks Andrew Lucas for suggesting this problem.\\
\textbf{Funding} This work was supported by FWO and F.R.S.-FNRS under the Excellence of Science (EOS) programme through the research project G0H1122N EOS 40007526 CHEQS, by the KU Leuven Runner-up Grant iBOF/DOA/20/011, and by the KU Leuven internal grant C14/21/086.\\

\bibliographystyle{unsrt} 
\bibliography{references}

\end{document}